\newcommand{\lyxaddress}[1]{
\par {\raggedright #1
\vspace{1.4em}
\noindent\par}
}
\def\mc {\mathcal}
\def\mr {\mathrm}
\def\mb {\mathbb}
\theoremstyle{plain}
\newtheorem{thm}{\protect\theoremname}
  \theoremstyle{plain}
  \newtheorem{lem}[thm]{\protect\lemmaname}
  \theoremstyle{plain}
  \newtheorem{fact}[thm]{\protect\factname}
  \theoremstyle{remark}
  \newtheorem{rem}[thm]{\protect\remarkname}
    \theoremstyle{plain}
  \newtheorem{coll}[thm]{\protect\collname}
      \theoremstyle{plain}
  \newtheorem{definition}[thm]{\protect\defname}
  \providecommand{\factname}{Fact}
  \providecommand{\lemmaname}{Lemma}
  \providecommand{\remarkname}{Remark}
\providecommand{\theoremname}{Theorem}
\providecommand{\collname}{Corollary}
\providecommand{\defname}{Definition}
\begin{document}
\selectlanguage{british}%
\global\long\global\long\global\long\def\bra#1{\mbox{\ensuremath{\langle#1|}}}
 \global\long\global\long\global\long\def\ket#1{\mbox{\ensuremath{|#1\rangle}}}
 \global\long\global\long\global\long\def\bk#1#2{\mbox{\ensuremath{\ensuremath{\langle#1|#2\rangle}}}}
 \global\long\global\long\global\long\def\kb#1#2{\mbox{\ensuremath{\ensuremath{\ensuremath{|#1\rangle\!\langle#2|}}}}}
\title{Products of finite  order rotations and quantum gates universality}
\author{Katarzyna Karnas$^{1}$ and Adam Sawicki$^{1}$}
\selectlanguage{english}
\date{}
\maketitle
\lyxaddress{$^1$Center for Theoretical Physics PAS, Al. Lotnik\'ow 32/46, 02-668, Warszawa, Poland}
\begin{abstract}
We consider a product of two finite order quantum $SU(2)$-gates $U_1$, $U_2$ and ask when $U_1\cdot U_2$ has an infinite order. Using the fact that $SU(2)$ is a double cover of $SO(3)$ we actually study the product  $O(\gamma,\vec{k}_{12})$ of two rotations $O(\phi,\vec{k}_1)\in SO(3)$ and $O(\phi,\vec{k}_2)\in SO(3)$ about axes $\vec{k}_1$, $\vec{k}_2\in \mathbb{R}^3$. In particular we focus on the case when $\vec{k}_1\cdot\vec{k}_2=0$, and  $\phi_1=\phi=\phi_2$ are rational multiple of $\pi$ and show that  $\gamma$ is not a rational multiple of $\pi$ unless $\phi\in\{\frac{k\pi}{2}:k\in\mathbb{Z}\}$. The proof presented in this paper boils down to finding all pairs $\gamma,\phi\in \{a\pi : a\in\mathbb{Q}\}$  that are solutions of  $\cos\frac{\gamma}{2}=\cos^2\frac{\phi}{2}$. \end{abstract}

\section{Introduction}\label{sec:intro} 


A finite subset $\mc{S}=\{g_1,\ldots,g_k\}$ of a Lie group $G$ is called a {\it generating set} if the set 
\begin{gather}
<\mathcal{S}>:=\{g_{i_1}^{k_1}\cdot\ldots\cdot g_{i_m}^{k_m}:g_{i_j}\in \mathcal{S},k_j\in \mathbb{N},i_j\in\{1,\ldots,n\}\},
\end{gather}
consisting of all words built using elements of $\mathcal{S}$ is dense in $G$. The problem of characterisation of generating sets for various types of Lie groups has recently attracted much more interest, in particular due to its direct connection to the quantum gates universality \cite{barenco,aronson,oszmanoec,exp2,Reck,sawicki2,sawicki3} (c.f. \cite{childs,daniel} for a similar problem of generating Hamiltonians). In short words the set of quantum gates is universal if and only if it is a generating set for $SU(d)$ or $SO(d)$. It is worth mentioning that for abelian $G$ the smallest generating set consist of one element whereas for nonabelian it is typically enough to use two elements \cite{kuranishi1}. 


Although $<\mc{S}>$ is closed under group multiplication it might not be a group itself as the inverses of some elements may not be in $<S>$. Nevertheless, its closure has a group structure (see fact 2.6 in \cite{sawicki2}). As was shown by Schur in his solution to the Burnside problem (see lemma 36.2 of \cite{curtis}), infinite but finitely generated groups of matrices over $\mb{R}$ or $\mb{C}$ always contain infinite order elements. Note that the necessary condition for $\mathcal{S}$ to be universal is that $<\mathcal{S}>$ contains infinite number of elements. Verification of this can be thus accomplished by pointing one infinite order element in $<\mathcal{S}>$. The efficiency of this approach, however, is limited when all generators (elements of $\mathcal{S}$) are of the finite order. In this case one should consider products of elements from  $\mathcal{S}$, as the noncommutativity of $G$ implies that the composition of two (or more) finite order elements may be of infinite order. In fact, the main step in the proof contained in \cite{NC00} of the universality for gates $\{H,T\left(\frac{\pi}{8} \right) \}\subset SU(2)$, where $H$ is the Hadamard gate and $T_{\frac{\pi}{8}}$ is a phase gate (both are elements of finite order), i.e. 
\begin{gather}
H=\frac{i}{\sqrt{2}}\left(\begin{array}{cc}1&1\\1&-1 \end{array} \right),\,\,T_{\frac{\pi}{8}}=\left(\begin{array}{cc}e^{i\pi/8}&0\\0&e^{-i\pi/8}\end{array} \right),
\end{gather}
is showing that $U=H\cdot T_{\frac{\pi}{8}}$ is of infinite order. The method used in the proof boils down to finding the minimal polynomial for $e^{i\phi}$, where the spectrum of $U$ is given by $\{e^{i\phi}, e^{-i\phi}\}$. If this polynomial is not a cyclotomic polynomial then $
\phi$ is not a rational multiple of $\pi$ and $U$ is of infinite order. Although the application of this method for gates $H$ and $T_{\frac{\pi}{8}}$ is a simple exercise \cite{Boykin} it becomes much more difficult for $\{H,T_\phi\}$, where $\phi$ is any rational multiple of $\pi$ or for an arbitrary pair of $SU(2)$ gates. In this short paper we discuss cases when one can efficiently prove that the product of two finite order elements from $SU(2)$ is of infinite order. As the group $SU(2)$ is the double cover of $SO(3)$ we work with matrices belonging to $SO(3)$ rather than to $SU(2)$. In particular we focus on the case when $\vec{k}_1\cdot\vec{k}_2=0$, and  $\phi_1=\phi=\phi_2$ are rational multiples of $\pi$ and show that  $\gamma$ is not a rational multiple of $\pi$ unless $\phi\in\{\frac{k\pi}{2}:k\in\mathbb{Z}\}$. The result obtained for this case is presented as our main theorem whereas, to make our notation clearer, we call all the previously known results as facts and our auxiliary results by lemmas. Our proof boils down to  finding all pairs $\gamma,\phi\in \{a\pi : a\in\mathbb{Q}\}$  that are solutions of  $\cos\frac{\gamma}{2}=\cos^2\frac{\phi}{2}$. In order to show that $\gamma$ is not a rational multiple of $\pi$ one can for example find the minimal polynomial for $e^{i\gamma}$ and check if this polynomial is cyclotomic. This approach was studied in \cite{sawicki1}, where in section 3.1 the author shows that in general the problem is intractable. Another approach that one could follow is finding the minimal polynomial for $\cos^2\phi$ and showing that its resultant with any of the Chebyshev polynomials in nonzero. This, however, again turns out to be hopeless calculation. Our approach consists of the following steps:
\begin{enumerate}
\item Write equation $\cos\frac{\gamma}{2}=\cos^2\frac{\phi}{2}$ as $2\cos\frac{\gamma}{2}=1+\cos\phi$.
\item For $\phi$ - a rational multiple of $\pi$ show that at least one coefficient of the minimal polynomial of $\cos\phi$ is noninteger (see lemma \ref{fact:coeffs_minpoly_cos}).
\item Prove that the minimal polynomial for $2\cos\frac{\gamma}{2}$ has integer coefficients when $\gamma$ is rational multiple of $\pi$ (see lemma \ref{fact:minpoly_2cos_coeffs}).
\item Using the companion matrix formalism described in section \ref{sec:comp_matrix} find formulas for coefficients of the minimal polynomial of $1+\cos\phi$ in terms of the coefficients of the minimal polynomial for $\cos\phi$ (see section \ref{sec:proofs}).
\item Show that coefficients of the minimal polynomial for $1+\cos\phi$ are not all integers if $\phi\notin\{\frac{k\pi}{2}:k\in\mathbb{Z}\}$ (see section \ref{sec:proofs}).
\end{enumerate}
The presented approach is both direct and simple. As we show all the above steps involve easy calculations. In the last section we discuss some cases when the rotation angles are not equal and the rotation axes are not perpendicular.


\section{Results} \label{sec:results}

Let $O(\phi,\vec{k})$ be a rotation by $\phi$ around the axis $\vec{k}$. We will consider three rotations about axes $\vec{k}_x,\vec{k}_y$ and $\vec{k}_z$ all by the same angle $\phi$:
\begin{gather}\label{so3_matrices_def}
O(\phi,\vec{k}_x)=\left(\begin{array}{ccc}1&0&0\\0&\cos\phi&\sin\phi\\0&-\sin\phi&\cos\phi\end{array} \right),\:O(\phi,\vec{k}_y)=\left(\begin{array}{ccc}\cos\phi&0&\sin\phi\\0&1&0\\-\sin\phi&0&\cos\phi\end{array} \right),\:O(\phi,\vec{k}_z)=\left(\begin{array}{ccc}\cos\phi&\sin\phi&0\\-\sin\phi&\cos\phi&0\\0&0&1\end{array} \right).
\end{gather}
The corresponding $SU(2)$ matrices are:
\begin{gather}\label{su2_matrices_def}
U(\frac{\phi}{2},\vec{k}_x)=\left(\begin{array}{cc}
\cos\frac{\phi}{2}&\sin\frac{\phi}{2}\\-\sin\frac{\phi}{2}&\cos\frac{\phi}{2}\end{array} \right),\:U(\frac{\phi}{2},\vec{k}_y)=\left(\begin{array}{cc}\cos\frac{\phi}{2}&i\sin\frac{\phi}{2}\\i\sin\frac{\phi}{2}&\cos\frac{\phi}{2}\end{array} \right),\:U(\frac{\phi}{2},\vec{k}_z)=\left(\begin{array}{cc}e^{i\frac{\phi}{2}}&0\\0&e^{-i\frac{\phi}{2}}\end{array} \right).
\end{gather}
The product of any two of the above three rotations (\ref{so3_matrices_def}) is again a rotation by an angle $\gamma$ where:
\begin{gather}\label{eq1}
2\cos\gamma+1=\cos^2\phi+2\cos\phi.
\end{gather}
Using  trigonometric identities we can write (\ref{eq1}) in a simpler form:
\begin{gather}\label{gamma_phi_transform1}
\pm \cos\frac{\gamma}{2}=\cos^2\frac{\phi}{2}.
\end{gather}
Note that the same equation is obtained from the product of any two unitary matrices (\ref{su2_matrices_def}), i.e. the resulting matrix has spectrum $\{e^{-i\frac{\gamma}{2}},e^{i\frac{\gamma}{2}}\}$, where $\gamma$ is determined by equation (\ref{gamma_phi_transform1}). Our main technical result is 
\begin{thm}\label{thm:main}
Assume $\phi$ is a rational multiple of $\pi$. Then $\gamma$  given by (\ref{gamma_phi_transform1}) is a rational multiple of $\pi$ if and only if $\phi\in\{\frac{k\pi}{2}:k\in \mathbb{Z}\}$.
\end{thm}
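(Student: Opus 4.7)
The plan is to reduce the theorem to a comparison of the integrality properties of the minimal polynomials of the two sides of equation (\ref{gamma_phi_transform1}), rewritten (by doubling and using $2\cos^{2}(\phi/2)=1+\cos\phi$) as
\[
\pm\, 2\cos\tfrac{\gamma}{2} \;=\; 1+\cos\phi .
\]
The reverse implication is essentially trivial: for $\phi\in\{k\pi/2:k\in\mathbb{Z}\}$ one has $\cos^{2}(\phi/2)\in\{0,\tfrac{1}{2},1\}$, and each of the three values forces $\gamma$ to be a rational multiple of $\pi$ by inspection.

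For the forward direction I would argue by contradiction: assume $\phi\in\pi\mathbb{Q}$, $\gamma\in\pi\mathbb{Q}$, but $\phi\notin\{k\pi/2\}$. By lemma \ref{fact:minpoly_2cos_coeffs}, the minimal polynomial $P(x)\in\mathbb{Q}[x]$ of $2\cos(\gamma/2)$ actually lies in $\mathbb{Z}[x]$. Since $1+\cos\phi=\pm 2\cos(\gamma/2)$, the number $1+\cos\phi$ is a root of $P(\pm x)\in\mathbb{Z}[x]$, hence an algebraic integer, and its monic minimal polynomial $M(y)\in\mathbb{Z}[y]$.

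Next I would feed $M(y)$ into the companion-matrix formalism of section \ref{sec:comp_matrix} (step 4 in the outline): the minimal polynomial of $\cos\phi=(1+\cos\phi)-1$ is obtained as $M(y+1)$, since if $C$ is the companion matrix of the minimal polynomial of $\cos\phi$ then $I+C$ has characteristic polynomial $M$. The shift $M(y)\mapsto M(y+1)$ is an upper-triangular change of basis on $\mathbb{Q}[y]_{<d}$ with unit diagonal and binomial-coefficient entries, so it preserves $\mathbb{Z}[y]$; in particular $M(y+1)\in\mathbb{Z}[y]$. But this contradicts lemma \ref{fact:coeffs_minpoly_cos}, which states that for $\phi$ a rational multiple of $\pi$ outside $\{k\pi/2\}$ the minimal polynomial of $\cos\phi$ has at least one non-integer coefficient. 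This completes the proof.

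The bulk of the real work sits inside the two auxiliary lemmas rather than in the argument above. Lemma \ref{fact:minpoly_2cos_coeffs} is the cleaner one: writing $2\cos(\gamma/2)=\zeta+\zeta^{-1}$ for an appropriate root of unity $\zeta$ realises the minimal polynomial as a factor of an integer polynomial of cyclotomic type. The main obstacle is lemma \ref{fact:coeffs_minpoly_cos}, which requires a careful tracking of the powers of $2$ appearing in the denominators of the minimal polynomial of $\cos(2\pi p/q)$; this is exactly the arithmetic input that singles out $\phi\in\{k\pi/2\}$ as the only admissible cases and so carries the quantitative content of the theorem.
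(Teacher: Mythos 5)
Your proof is correct and follows essentially the same route as the paper: rewrite the equation as $\pm 2\cos\frac{\gamma}{2}=1+\cos\phi$, then play the integrality of the minimal polynomial of $2\cos\frac{\gamma}{2}$ (lemma \ref{fact:minpoly_2cos_coeffs}) against the non-integrality of the minimal polynomial of $\cos\phi$ (lemma \ref{fact:coeffs_minpoly_cos}), with the companion-matrix formalism linking the two. Your observation that the shift $y\mapsto y+1$ is a unitriangular integer change of basis and hence preserves $\mathbb{Z}[y]$ in both directions is a cleaner packaging of what the paper does by explicitly expanding the determinant of $M_{\cos\phi+1}-Ix$ and running a coefficient-by-coefficient induction from $\omega_{d-1}$ downward, but the underlying argument is the same.
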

Using theorem \ref{thm:main}, one easily deduces the following:
\begin{coll}\label{lem:productRotations}
Assume $\phi$ is a rational multiple of $\pi$ and  $\phi\notin\{\frac{k\pi}{2}: k\in\mathbb{Z}\}$. Let $O(\phi,\vec{k}_1),O(\phi,\vec{k}_2)\in SO(3)$ be rotations around orthogonal axes $\vec{k}_i\perp\vec{k}_j$, by the angle $\phi$. Then $O(\gamma,\vec{k})=O(\phi,\vec{k}_1)O(\phi,\vec{k}_2)$ is the rotation by an angle $\gamma$ which is an irrational multiple of $\pi$. Moreover, the set generated by $\{O(\phi,\vec{k}_1),O(\phi,\vec{k}_2)\}$ or by the corresponding unitary matrices $\{U(\phi/2\vec{k}_1),U(\phi/2,\vec{k}_2)\}$ is infinite if and only if $\phi \notin \{\frac{k\pi}{2}:k\in \mathbb{Z}\}$. 
\end{coll}
\begin{proof}
We only need to verify the case $\phi \in \{\frac{k\pi}{2}:k\in \mathbb{Z}\}$. But in this case matrices (\ref{so3_matrices_def}) have entries in  $\{1,-1\}$ and generate either permutation group $S_3$ or finite abelian group. 
\end{proof}
%

\section{Number fields and minimal polynomials}\label{app:minpolys}

In this section we present basic facts concerning minimal polynomials. Using the {\it companion matrix} formalism we explain the relationship between the minimal polynomials of two algebraic numbers and the minimal polynomial of their sum and product.

Recall that $\alpha\in\mathbb{C}$ is an \emph{algebraic number} if it is a root of a polynomial $p\in \mathbb{Q}[x]$, i.e. a polynomial with rational coefficients. The order of $\alpha$ is the order of the monic polynomial $m_\alpha\in\mathbb{Q}[x]$ of the least degree, having $\alpha$ as a root. In other words $m_\alpha$ is irreducible over $\mb{Q}$, i.e. it cannot be decomposed into a product of polynomials from $\mathbb{Q}[x]$. This monic polynomial is called the minimal polynomial of $\alpha$ and is uniquely determined by $\alpha$. If $\alpha$ is not algebraic then it is called \emph{transcendental}. Algebraic numbers form an infinite countable set, $\mb{A}$ \cite{baker}, whereas the set $\mb{T}$ of transcendental numbers is uncountable and we have $\mb{A}\cup \mb{T}=\mb{C}$. The examples of algebraic numbers are all integers, rational numbers or the numbers of the form: $\sqrt[k]{n}$, $\cos\frac{2k\pi}{n}$, $e^{2ik\pi/n}$, $\sin\frac{2k\pi}{n}$, where $n,k\in\mb{N}$. On the other hand the numbers $\pi,e$ are known to be transcendental. Obviously if $a\in\mathbb{A}\backslash \{0\}$ and $t\in\mb{T}$, then also $a^{-1},-a$ are algebraic numbers and $t^{-1},-t$ are transcendental numbers.


Having defined algebraic and transcendental numbers, we introduce the notion of a field extension. Let $\mathbb{L}$ be a field that contains $\mathbb{Q}$ as a subfield and let $S$ be a subset of $\mathbb{L}$. We define the field $\mathbb{Q}(S)$ to be the smallest field that contain both $\mathbb{Q}$ and $S$. We will call $\mathbb{Q}(S)$ a field extension of $\mathbb{Q}$ obtained by adjoining elements of $S$. If all elements in $\mathbb{Q}(S)$ are algebraic numbers then the field $\mathbb{Q}(S)$ is called an algebraic field extension. One can view $\mathbb{Q}(S)$ as a vector space over $\mathbb{Q}$. The dimension of this space, denoted by $[\mathbb{Q}(S):\mathbb{Q}]$, will be called a degree of $\mathbb{Q}(S)$ over $\mathbb{Q}$. If $[\mathbb{Q}(S):\mathbb{Q}]<\infty$ the extension is the finite extension. Finite extensions are algebraic as for any $\alpha\in \mathbb{Q}(S)$ numbers $1,\alpha,\alpha^2,\ldots,\alpha^{[\mathbb{Q}(S):\mathbb{Q}]}$ are linearly dependent over $\mathbb{Q}$ which means there is a polynomial $p\in \mathbb{Q}[x]$ which satisfies $p(\alpha)=0$. The converse is not true.

In this paper we will deal with extensions $Q(S)$, where $S=\{\alpha\}$ and $\alpha$ is an algebraic number of order $n$ whose minimal polynomial is $m_\alpha\in\mathbb{Q}[x]$. For convenience the field $\mathbb{Q}(S)$ will be denoted by $\mathbb{Q}(\alpha)$. Among elements  belonging to $\mathbb{Q}(\alpha)$ are all polynomial expressions in $\alpha$. It turns out that these expressions already form a field which contains $\alpha$ and $\mathbb{Q}$ and therefore they constitute $\mathbb{Q}(\alpha)$. We note that for all $p\in \mathbb{Q}[x]$ satisfying $p(\alpha)=0$ the corresponding expression is $0$. Thus $\mathbb{Q}(\alpha)$ is isomorphic $\mathbb{Q}[x]/(m_\alpha)$, where $(m_\alpha)$ is an ideal of polynomials vanishing on $\alpha$. Using the polynomial division formula, any element $\beta\in\mathbb{Q}(\alpha)$ can be written as $\beta=m_\alpha(\alpha)f(\alpha)+r(\alpha)$, where $\mathrm{deg}\;r<\mathrm{deg}\;m_\alpha=n$. Thus any element of $Q(\alpha)$ is a polynomial in $\alpha$ of degree less than $n$ with coefficients in $\mathbb{Q}$. Therefore $Q(\alpha)$ is a finite extension whose basis is $\{1,\alpha,\ldots,\alpha^{n-1}\}$ and $[\mathbb{Q}(\alpha):\mathbb{Q}]=n$. Hence any element $\beta \in \mathbb{Q}(\alpha)$ is algebraic and $\mathbb{Q}(\beta)\subset\mathbb{Q}(\alpha)$. One can show that the order of $\beta$ is a divisor of the order of $\alpha$, i.e. a divisor of $n$ and is given by 
\begin{gather}\label{beta-deg}
\mr{deg}\;m_\beta=[\mb{Q}(\beta):\mathbb{Q}]=[\mb{Q}(\alpha):\mb{Q}]/[\mb{Q}(\alpha):\mb{Q}(\beta)],
\end{gather}where $m_\beta$ is the minimal polynomial for $\beta$.

Consider field extension $Q(e^{i\phi})$ where $\phi$ is a rational multiple of $\pi$. As  $e^{i\phi}$ is a root of unity, $Q(e^{i\phi})$ is a finite algebraic extension. Next, $\sin\phi$, $\cos\phi$ depend on $e^{i\phi}$ as $\cos\phi=\frac{e^{i\phi}+e^{-i\phi}}{2}$ and $\sin\phi=\frac{e^{i\phi}-e^{-i\phi}}{2i}$ and therefore $\mathbb{Q}(\cos\phi)\subset \mathbb{Q}(e^{i\phi})$ and $\mathbb{Q}(\sin\phi)\subset \mathbb{Q}(e^{i\phi})$. The minimal polynomials of these functions are characterised by several unique properties. In this paper we place particular emphasis on the minimal polynomials $\psi_n$ for $\cos\frac{2\pi}{n}$ and $\eta_n$ for $2\cos\frac{\phi}{2}$. Their properties will be crucial in the proof of theorem \ref{thm:main}, therefore we describe them at length in section \ref{app:cosines}.

\subsection{Companion matrix formalism}\label{sec:comp_matrix}

The notion of a minimal polynomial can be generalized to  matrices over $\mb{Q}$. We say that a square matrix $M\in \mr{M}_{n}(\mb{Q})$ is a root of a polynomial $p\in\mathbb{Q}[x]$ if $p(M)=0$ (in other words $p$ \emph{annihilates} $M$). Let $\chi_M\in\mathbb{Q}[x]$ be the characteristic polynomial of $M\in \mr{M}_{n}(\mb{Q})$. By the Cayley-Hamilton theorem $M$ is a root of its own characteristic polynomial, i.e. $\chi_M(M)=0$. The monic polynomial $m_M\in\mathbb{Q}[x]$ of the smallest degree that is irreducible over $\mathbb{Q}$ and annihilates $M$ will be called the \emph{minimal polynomial} of $M$. Conversely, to every  minimal polynomial $m_\alpha\in\mathbb{Q}[x]$  we can associate the matrix called a \emph{companion matrix} $M_\alpha$, defined as follows:
\begin{definition}\label{def:comp_matrix}
Let $m_\alpha$  be a minimal polynomial of $\alpha$ of degree $\deg\; m_\alpha=d$ given by $m_\alpha(x) = x^d+\sum_{k=0}^{d-1} c_k\cdot x^k$. The \emph{companion matrix} $M_\alpha$ is the $n\times n$ matrix over $\mb{Q}$ defined as
\begin{gather}
M_\alpha = \left(\begin{array}{cccccc}
0&0&0&\ldots&0&-c_0\\1&0&0&\ldots&0&-c_1\\0&1&0&\ldots&0&-c_2\\
\vdots&\vdots&\vdots&\ddots&\vdots&\vdots\\0&0&\ldots&1&0&-c_{d-2}\\0&0&0&\ldots&1&-c_{d-1}
\end{array} \right).
\end{gather}
\label{def:companion_matrix}
\end{definition}
\noindent One can show (see \cite{axler}) that 
\begin{gather}
\chi_{M_\alpha} = m_{M_\alpha} = m_\alpha.
\label{def:companion_matrix_equalities}
\end{gather}  
Assume $\alpha,\beta\in\mb{A}$ and their minimal polynomials are $m_\alpha$ and $m_\beta$ respectively. Companion matrix formalism allows us to find polynomials that annihilate  $\alpha\beta$ and $\alpha+\beta$ knowing polynomials $m_\alpha$ and $m_\beta$. To this end we construct matrices 
\begin{gather}
M_{\alpha\beta} = M_\alpha\otimes M_\beta,\quad M_{\alpha+\beta} = M_\alpha\otimes I_\beta+I_\alpha\otimes M_\beta.
\label{def:comp_matrix_identities}
\end{gather} 
The characteristic polynomials of these matrices annihilate $\alpha\beta$ and $\alpha+\beta$ respectively. We note, however,  that $m_{\alpha+\beta}$, $m_{\alpha\beta}$ may not be equal to characteristic polynomials of $M_{\alpha+\beta}$ and $M_{\alpha\beta}$ as these matrices are not companion matrices in the true sense of this word. Nevertheless, the equality holds when either $\alpha$ or $\beta$ belongs to $\mathbb{Q}$. 
\begin{fact}\label{fact-degree}
Assume that $\alpha\in\mb{Q}$ and $\beta\notin\mathbb{Q}$. Then $[\mathbb{Q}(\alpha+\beta):\mathbb{Q}]=[\mathbb{Q}(\alpha\beta):\mathbb{Q}]=[\mathbb{Q}(\beta):\mathbb{Q}]$ and the minimal polynomials of $\alpha\beta$ and $\alpha+\beta$ are given by the characteristic polynomials of 
\begin{gather}\label{simplified_comp}
M_{\alpha\beta} = \alpha M_\beta,\quad M_{\alpha+\beta} = \alpha I_\beta+ M_\beta,
\end{gather} 
where $M_\alpha$ and $M_\beta$ are the companion matrices of $\alpha$ and $\beta$.
\end{fact}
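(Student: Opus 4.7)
The plan splits naturally into two parts: the degree equalities, which are soft, and the explicit identification of the minimal polynomials via the matrices in (\ref{simplified_comp}), which is where the companion matrix formalism pays off. For the degrees, since $\alpha\in\mathbb{Q}$ we have $\beta=(\alpha+\beta)-\alpha\in\mathbb{Q}(\alpha+\beta)$ and conversely $\alpha+\beta\in\mathbb{Q}(\beta)$, so $\mathbb{Q}(\alpha+\beta)=\mathbb{Q}(\beta)$; the analogous reasoning works for $\alpha\beta$ provided $\alpha\neq 0$ (implicit in the statement, since otherwise $\alpha\beta=0\in\mathbb{Q}$ and no such equality could hold). Taking $\mathbb{Q}$-dimensions yields both equalities at once.

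For the minimal polynomial identification, my first move is a direct determinant computation. By (\ref{def:companion_matrix_equalities}) we know $\chi_{M_\beta}=m_\beta$, so pulling $\alpha$ out of each row (for the product case) gives
\begin{gather*}
\chi_{\alpha I_\beta+M_\beta}(x)=\det\bigl((x-\alpha)I_\beta-M_\beta\bigr)=m_\beta(x-\alpha), \qquad \chi_{\alpha M_\beta}(x)=\det(xI_\beta-\alpha M_\beta)=\alpha^d\, m_\beta(x/\alpha),
\end{gather*}
where $d=\deg m_\beta=[\mathbb{Q}(\beta):\mathbb{Q}]$. Both expressions are monic in $x$, of degree $d$, and visibly annihilate $\alpha+\beta$ and $\alpha\beta$, respectively.

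To promote these characteristic polynomials to minimal polynomials it only remains to verify irreducibility over $\mathbb{Q}$. The $\mathbb{Q}$-linear substitutions $x\mapsto x-\alpha$ and $x\mapsto x/\alpha$ (the latter using $\alpha\in\mathbb{Q}^{\times}$) are $\mathbb{Q}$-algebra automorphisms of $\mathbb{Q}[x]$, hence they preserve irreducibility; since $m_\beta$ is irreducible by hypothesis, so are its images $m_\beta(x-\alpha)$ and $\alpha^d m_\beta(x/\alpha)$. Being monic, irreducible over $\mathbb{Q}$, and possessing $\alpha+\beta$ and $\alpha\beta$ as roots, they must coincide with $m_{\alpha+\beta}$ and $m_{\alpha\beta}$. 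The degree count $d$ then agrees with the degree equalities from the first paragraph, closing the loop.

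I do not anticipate a serious obstacle here: the whole argument reduces to a one-line determinant computation plus an elementary change-of-variable argument for irreducibility. The only care needed is to keep the assumption $\alpha\neq 0$ visible when handling the product, and to remember that the conclusion genuinely relies on $\alpha\in\mathbb{Q}$ — without rationality of $\alpha$ the substitutions above would not be $\mathbb{Q}$-algebra automorphisms and irreducibility could fail.
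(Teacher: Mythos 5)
Your proof is correct, and it reaches the identification of the minimal polynomials by a genuinely different mechanism than the paper. The paper never computes the characteristic polynomials explicitly: it only observes that $\chi_{\alpha M_\beta}$ and $\chi_{\alpha I_\beta+M_\beta}$ annihilate $\alpha\beta$ and $\alpha+\beta$, uses $\mathbb{Q}(\beta)=\mathbb{Q}(\alpha+\beta)=\mathbb{Q}(\alpha\beta)$ together with formula (\ref{beta-deg}) to force $\deg m_{\alpha+\beta}=\deg m_{\alpha\beta}=[\mathbb{Q}(\beta):\mathbb{Q}]=\deg m_\beta$, and then concludes by matching degrees of monic polynomials (a monic annihilator of the same degree as the minimal polynomial must equal it). You instead evaluate the determinants in closed form, $\chi_{\alpha I_\beta+M_\beta}(x)=m_\beta(x-\alpha)$ and $\chi_{\alpha M_\beta}(x)=\alpha^{d}m_\beta(x/\alpha)$, and establish irreducibility directly from the fact that $x\mapsto x-\alpha$ and $x\mapsto x/\alpha$ are $\mathbb{Q}$-algebra automorphisms of $\mathbb{Q}[x]$ (the latter because $\alpha\in\mathbb{Q}^{\times}$). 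Your route yields explicit formulas for $m_{\alpha+\beta}$ and $m_{\alpha\beta}$ and does not rely on (\ref{beta-deg}) for the identification step, which makes it more self-contained; the paper's route is shorter but leaves the polynomials implicit. You also make visible the hypothesis $\alpha\neq 0$ required for the product statement, which the paper leaves tacit. Both arguments are sound.
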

\begin{proof}
Note that $m_\alpha$ is the first order polynomial $m_\alpha(x)=x-\alpha$. Thus $M_\alpha$ is a $1\times 1$ matrix $M_\alpha=\alpha$. Using the companion matrix formalism we know that the characteristic polynomials of the matrices (\ref{simplified_comp}) annihilate $\alpha\beta$ and $\alpha+\beta$ respectively. We also know that $\mb{Q}(\beta)=\mb{Q}(\alpha+\beta)=\mb{Q}(\alpha\beta)$. Using formula (\ref{beta-deg}) we get that $\mr{deg}\;m_{\alpha+\beta}=\mr{deg}\;m_{\alpha\beta}=[\mb{Q}(\beta):\mb{Q}]=\mr{deg}({m_\beta})$. But the degrees of $\chi_{M_{\alpha\beta}}$ and $\chi_{M_{\alpha+\beta}}$ are also $\mr{deg}\;m_\beta$. The result follows.
\end{proof}
\noindent In general the degrees for $\alpha,\beta \in \mathbb{A}$ are bounded by 
\[
\max(\deg m_\alpha,\deg m_\beta)\leq \deg m_{\alpha+\beta,\alpha\beta} \leq \deg m_\alpha\deg m_\beta.
\] The lower bound corresponds to the case when one of the field extensions $\mb{Q}(\alpha),\;\mb{Q}(\beta)$ is a subfield of the second one, e.g. $\mb{Q}(\alpha)\subset\mb{Q}(\beta)$. Then both $\alpha+\beta$ and $\alpha\beta$ belong to the larger field extension. In that case the degrees of the minimal polynomials of $m_{\alpha+\beta}$ and $m_{\alpha\beta}$ are equal to the degree of $\mb{Q}(\beta)$ by definition. The other case is when $\mathbb{Q}(\alpha)$ and $\mb{Q}(\beta)$ have no common elements different than rational numbers. Then $\mb{Q}(\alpha\beta)$ and $Q(\alpha+\beta)$ are the field extensions of the degree at most $[\mb{Q}(\alpha):\mb{Q}]\cdot[\mb{Q}(\beta):\mb{Q}]$. In order to check it, let $\{1,\alpha,\ldots,\alpha^m \}$ be a basis of $\mb{Q}(\alpha)$ and  $\{1,\beta,\ldots,\beta^n \}$ be a basis of $\mb{Q}(\beta)$ such that $\alpha
^i\neq \beta^j$ for all $1\leq i\leq m$, $1\leq j\leq n$. Then the basis of $\mb{Q}(\alpha\beta)$ consists of all $\alpha^i\beta^j$'s and the maximal number of such \emph{different} basis elements is $[\mb{Q}(\alpha):\mb{Q}]\cdot[\mb{Q}(\beta):\mb{Q}]$. The same holds for $\mb{Q}(\alpha+\beta)$. Thus typically we need to find $\chi_{M_{\alpha\beta}}$ or $\chi_{M_{\alpha+\beta}}$  and factorize it over $\mathbb{Q}$ obtaining $\chi_{M_{\alpha+\beta,\alpha\beta}}=m_{\alpha+\beta,\alpha\beta}\cdot p_1\cdot\ldots\cdot p_k$. 
 
Polynomials with integer coefficients are characterised by special properties. Particularly important are the \emph{primitive} polynomials for which the greatest common divisor of their coefficients is equal to one.  The set of primitive polynomials is closed under multiplication in $\mb{Z}[x]$, in other words:
\begin{fact}\label{fact:monic}\cite{edwards}
Let $f,g\in\mb{Z}[x]$ be primitive polynomials. Then the product $f\cdot g$ is also a primitive polynomial.
\end{fact}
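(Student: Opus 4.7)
The plan is to prove this classical statement (Gauss's lemma) by contradiction, exploiting primality. Suppose $f\cdot g$ is not primitive. Then there exists a prime $p\in\mb{Z}$ dividing every coefficient of $f\cdot g$. I will derive a contradiction by locating a single coefficient of $f\cdot g$ that $p$ cannot possibly divide.

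Write $f(x)=\sum_i a_i x^i$ and $g(x)=\sum_j b_j x^j$. Since $f$ is primitive, not all $a_i$ are divisible by $p$; let $i_0$ be the smallest index with $p\nmid a_{i_0}$. Similarly define $j_0$ as the smallest index with $p\nmid b_{j_0}$, which exists by primitivity of $g$. I would then isolate the coefficient $c_{i_0+j_0}$ of $x^{i_0+j_0}$ in $f\cdot g$, which can be written as
\begin{gather*}
c_{i_0+j_0}=a_{i_0}b_{j_0}+\sum_{\substack{i+j=i_0+j_0\\ i<i_0}} a_i b_j+\sum_{\substack{i+j=i_0+j_0\\ j<j_0}} a_i b_j.
\end{gather*}
By the minimality of $i_0$ and $j_0$, every term in the two sums on the right has a factor divisible by $p$, so $p$ divides both sums. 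Since $p$ is prime and divides neither $a_{i_0}$ nor $b_{j_0}$, it does not divide $a_{i_0}b_{j_0}$. Hence $p\nmid c_{i_0+j_0}$, contradicting the assumption that $p$ divides all coefficients of $f\cdot g$.

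A cleaner way to organise the same argument is to use the reduction-modulo-$p$ homomorphism $\pi_p:\mb{Z}[x]\to\mathbb{F}_p[x]$. Assuming $p$ divides every coefficient of $f\cdot g$ means $\pi_p(f)\cdot \pi_p(g)=\pi_p(f\cdot g)=0$ in $\mathbb{F}_p[x]$; since $\mathbb{F}_p[x]$ is an integral domain, either $\pi_p(f)=0$ or $\pi_p(g)=0$, i.e.\ $p$ divides every coefficient of $f$ or of $g$, contradicting primitivity of both. Either formulation works and the content is identical; the only genuine ingredient is that $\mb{Z}/p\mb{Z}$ has no zero divisors.

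There is no real obstacle here: the proof is elementary and short, and could alternatively be cited from a standard reference such as \cite{edwards}. The only point requiring minor care is the bookkeeping of indices in the ``smallest non-divisible coefficient'' argument, which is why the homomorphism formulation is often preferred for conciseness.
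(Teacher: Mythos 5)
Your proof is correct. The paper itself gives no proof of this fact --- it is stated as a Fact and attributed to \cite{edwards} --- so there is nothing to compare against; your argument (smallest indices $i_0,j_0$ with $p\nmid a_{i_0}$, $p\nmid b_{j_0}$, then $p\nmid c_{i_0+j_0}$, or equivalently reduction modulo $p$ and the fact that $\mathbb{F}_p[x]$ is an integral domain) is the standard one and correctly fills the gap the paper delegates to the reference. The only cosmetic remark: you call this statement ``Gauss's lemma,'' whereas the paper reserves that name for the reducibility statement (its Lemma on reducibility over $\mb{Q}$ versus $\mb{Z}$), which is deduced from this fact; this is a naming convention issue, not a mathematical one.
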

Every polynomial with rational coefficients $f\in \mb{Q}[x]$ can be associated with a primitive polynomial $\tilde{f}$ such, that
\begin{gather}\label{eq:primitive_associate}f(x)=\alpha \tilde{f},\:\alpha\in\mb{Q},
\end{gather}and the pair $(\tilde{f},\alpha)$ is unique. This fact allows us to formulate the following theorem that we prove for the reader's convenience.
\begin{lem}[Gauss lemma]\label{thm:gauss} \cite{edwards}
Let $f$ be a polynomial with integer coefficients. $f$ is reducible over $\mb{Q}$ if and only if it is reducible over $\mb{Z}$. 
\end{lem}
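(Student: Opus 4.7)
The forward implication is immediate, since any factorization of $f$ into nonconstant elements of $\mathbb{Z}[x]$ is a fortiori a factorization in $\mathbb{Q}[x]$. The substantive direction is the converse, and my plan is the standard content-based argument: given a rational factorization $f = gh$ into polynomials of positive degree, I would extract primitive integer ``shapes'' of $g$ and $h$ and show that the rational rescaling factor must in fact be an integer, producing a factorization over $\mathbb{Z}$.

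Concretely, I would first apply the primitive-associate decomposition (\ref{eq:primitive_associate}) to write $g = \alpha \tilde g$ and $h = \beta \tilde h$ with $\alpha,\beta \in \mathbb{Q}$ and $\tilde g, \tilde h \in \mathbb{Z}[x]$ primitive, and separately $f = c\,\tilde f$ with $c\in\mathbb{Z}$ and $\tilde f \in \mathbb{Z}[x]$ primitive. Equating the two resulting expressions for $f$ then gives $c\,\tilde f = (\alpha\beta)\,\tilde g\,\tilde h$. Here Fact \ref{fact:monic} enters as the decisive input: it guarantees that $\tilde g\,\tilde h$ is itself primitive, so by the uniqueness part of the primitive decomposition stated just before the lemma, the rational number $\alpha\beta$ must equal $\pm c$, and in particular must be an integer. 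Consequently $(\alpha\beta)\,\tilde g$ lies in $\mathbb{Z}[x]$, and
\[
f \;=\; \bigl((\alpha\beta)\,\tilde g\bigr)\cdot \tilde h
\]
is the desired factorization of $f$ over $\mathbb{Z}$ with factors of the same positive degrees as $g$ and $h$.

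The entire argument hinges on Fact \ref{fact:monic}: without the primitivity of $\tilde g\,\tilde h$ there would be no reason that the denominators introduced when normalising $g$ and $h$ should cancel after multiplication, and one could not conclude that $\alpha\beta \in \mathbb{Z}$. With that fact in hand, however, only a routine comparison of contents remains, so I do not expect any hidden obstacle beyond making the uniqueness of the primitive decomposition precise.
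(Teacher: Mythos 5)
Your proof is correct and follows essentially the same route as the paper: both reduce to the primitive-associate decomposition, invoke Fact \ref{fact:monic} to see that $\tilde g\,\tilde h$ is primitive, and compare contents to force the rational scaling factor to be an integer. The only cosmetic difference is that the paper normalises $f$ to be primitive at the outset and shows $\gamma\chi=1$ via a $p/q$ argument, whereas you keep the content of $f$ explicit and appeal to the uniqueness in (\ref{eq:primitive_associate}); the substance is identical.
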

\begin{proof}The first part of the proof is trivial since $\mb{Z}\subset\mb{Q}$. Thus it is enough to prove that reducibility over $\mb{Q}$ implies reducibility over $\mb{Z}$.

Let $f\in \mb{Z}[x]$ be a primitive polynomial, reducible over $\mb{Q}$ (if not, there is always a pair $(\tilde{f},\alpha)$ such, that $\tilde{f}$ is primitive and $\alpha\in\mb{Z}$).
Assume that $f$ decomposes as $f=g\cdot h$, where $g,h\in\mb{Q}[x]$. By the identity (\ref{eq:primitive_associate}) these polynomials can be expressed as $g=\gamma\tilde{g}$, $h=\chi\tilde{h}$, where $\gamma,\chi\in\mb{Q}$ and $\tilde{g},\tilde{h}$ are primitive polynomials. Therefore we have $f=g\cdot h=\gamma\chi \tilde{g}\tilde{h}$, where $\tilde{g}\cdot\tilde{h}$ is a primitive polynomial.

Let us substitute $\gamma\chi=\frac{p}{q}$ for some $p,q\in\mb{Z}$, $\gcd(p,q)=1$. Hence 
\begin{gather}\label{eq:primitive_polynoms_equality}
qf=p\tilde{g}\cdot\tilde{h}.
\end{gather}Since $f$ and $\tilde{g}\cdot\tilde{h}$ are primitive, the greatest common divisor of the coefficients of the left and right hand side of (\ref{eq:primitive_polynoms_equality}) are equal $q$ and $p$ respectively. 
Thus $p=q$, which implies $\gamma\chi=1$ and $f=\tilde{g}\cdot\tilde{h}$ and $f=\tilde{g}\tilde{h}\in\mb{Z}[x]$. This completes the proof.
\end{proof}

\section{Trigonometric minimal polynomials }\label{app:cosines} 

By trigonometric polynomials we will understand the minimal polynomials for $\cos\phi,\sin\phi, 2\cos\frac{\phi}{2},\tan\phi$, where $\phi$ is a rational multiple of $\pi$. A detailed description of their properties was given in e.g. \cite{bayad,zeitlin,minpolys,minpolys2}. In this paper we will concentrate on minimal polynomials for $\cos\phi$ and $2\cos\frac{\phi}{2}$, that are closely related to Chebyshev polynomials of the first kind. 
\begin{definition}\label{def:chebyshev_def}
Chebyshev polynomials of the first kind $T_k(x)$, $k=0,1,\ldots$ are defined by the recurrence formula \begin{gather}\label{def:chebyshev}
T_k (x)= 2xT_{k-1}(x)-T_{k-2}(x),\:T_0(x)=1,\;T_1(x)=x.
\end{gather} Equivalently, $T_k(x)$'s are the polynomials  satisfying
\begin{gather}\label{def:chebyshev2} T_k(\cos\phi) = \cos k\phi.
\end{gather}
\end{definition} 
From the formula (\ref{def:chebyshev}) one can deduce properties of the coefficients of Chebyshev polynomials. 
\begin{fact}\label{fact:chebysh_coeffs}
Let $T_k(x)=\sum_{i=0}^{k} c_i x^i$ be the Chebyshev polynomial of the first kind of the degree $k$. The coefficients $c_0,\ldots,c_k$ satisfy:
\begin{enumerate}
\item $c_0,\ldots,c_k$ are integer numbers.
\item The leading coefficient $c_k$ is equal to $2^{k-1}$.
\item $T_k(x)$ is a polynomial of only odd/even powers of $x$ if $k$ is an odd/even number respectively.  
\item If $k$ is even, then the free term $c_0$ is given by $c_0=\pm 1$. 
\item If $k$ is odd, then the coefficients $c_0,c_1$ are equal $c_0=0$, $c_1=\pm k$.
\end{enumerate}
\end{fact}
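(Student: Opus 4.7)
The plan is to prove all five items simultaneously by strong induction on $k$, using the Chebyshev recurrence $T_k(x) = 2xT_{k-1}(x) - T_{k-2}(x)$ as the engine. The base cases $T_0 = 1$, $T_1 = x$, $T_2 = 2x^2 - 1$ can be checked by inspection: each clearly satisfies items 1--5.

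For the inductive step, items 1, 2 and 3 fall out almost immediately. Integer coefficients are preserved by addition, subtraction and multiplication by $2x$, giving item 1. The top-degree contribution to $T_k$ comes only from $2xT_{k-1}$, whose leading coefficient equals $2 \cdot 2^{k-2} = 2^{k-1}$, while $T_{k-2}$ only affects lower degrees, giving item 2. For item 3, by induction $T_{k-1}$ consists of powers of parity $k-1$, so $2xT_{k-1}$ has powers of parity $k$; and $T_{k-2}$ already has powers of parity $k-2 \equiv k \pmod 2$, so every term in the recurrence produces only powers of parity $k$.

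Items 4 and 5 require a little more care. For item 4 with $k$ even, item 3 applied to $T_{k-1}$ (whose powers are odd) shows that $2xT_{k-1}$ has no constant term, so the constant term of $T_k$ equals $-c_0(T_{k-2}) = \mp 1 = \pm 1$ by the inductive hypothesis. For item 5 with $k$ odd, $c_0 = 0$ is immediate from item 3. To pin down $c_1$, I would compare coefficients of $x$ on both sides of the recurrence to get $c_1(T_k) = 2 c_0(T_{k-1}) - c_1(T_{k-2})$, where by the inductive hypothesis $c_0(T_{k-1}) = \pm 1$ (since $k-1$ is even) and $c_1(T_{k-2}) = \pm(k-2)$ (since $k-2$ is odd).

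The only real obstacle is ensuring the signs combine to give exactly $\pm k$ rather than, say, $\pm(k-4)$. To handle this I would strengthen the induction hypothesis to track the exact signs, namely $c_0(T_k) = (-1)^{k/2}$ for $k$ even and $c_1(T_k) = (-1)^{(k-1)/2} k$ for $k$ odd; then a short arithmetic check in the recurrence closes both items cleanly. A quicker alternative that avoids any sign bookkeeping is to differentiate the defining identity $T_k(\cos\phi) = \cos(k\phi)$ and evaluate at $\phi = \pi/2$: this yields $c_1 = T_k'(0) = k\sin(k\pi/2) = \pm k$ for odd $k$, and similarly $c_0 = T_k(0) = \cos(k\pi/2) = \pm 1$ for even $k$, directly identifying the desired coefficients.
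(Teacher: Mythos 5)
Your proof is correct and follows the route the paper itself gestures at (the paper states this as a known Fact deducible ``from the formula (\ref{def:chebyshev})'' and gives no proof of its own): induction on the recurrence, with the sign ambiguity in items 4 and 5 correctly resolved either by strengthening the hypothesis to $c_0(T_k)=(-1)^{k/2}$ for even $k$ and $c_1(T_k)=(-1)^{(k-1)/2}k$ for odd $k$, or by evaluating $T_k(0)$ and $T_k'(0)$ via $T_k(\cos\phi)=\cos k\phi$. The only quibble is the degenerate case $k=0$, where item 2 as literally stated fails ($T_0=1$ has leading coefficient $1\neq 2^{-1}$), so your claim that the base cases ``clearly satisfy items 1--5'' needs the usual caveat $k\geq 1$ there; this is really a defect of the Fact's statement rather than of your argument.
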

Assume that $\phi$ is a rational multiple of $\pi$. In this case one can express the  minimal polynomial of $\cos\phi$ using the Chebyshev polynomials and vice versa. The explicit expressions are given in the following:
\begin{fact}\cite{zeitlin,minpolys}\label{def:psi}
A minimal polynomial $\psi_n(x)$ of $\cos\frac{2\pi}{n}$ is a polynomial defined as
\begin{gather}\label{eq:minpoly_12}
\psi_1(x)=x-\cos 2\pi=x-1,\quad \psi_2=x-\cos\pi = x+1,\\\label{exact_def_cos_minpoly}
\psi_n(x) = \prod_{\small \begin{array}{c}
1\leq k\leq \lfloor n/2\rfloor\\ \gcd(k,n)=1\end{array}  } \left(x-\cos\frac{2k\pi}{n} \right),\;n\geq 3,\\\deg\psi_n(x) = \left\lbrace
\begin{array}{cc}
1& n=1,2\\\frac{\varphi(n)}{2}&n\geq 3\end{array}\right.,
\end{gather}where $\varphi(n)$ is the \emph{Euler totient function} counting the positive integers $k\leq n$ that are coprime to $n$. The canonical identity relating $\psi_n(x)$ with Chebyshev polynomials of the first kind is of the form 
\begin{gather}\label{eq:zeitlin_definition_odd}
2^k\prod_{d|n}\psi_d(x) =  T_{k+1}(x)-T_k(x),\:\:n=2k+1,\\
2^k\prod_{d|n}\psi_d(x) =  T_{k+1}(x)-T_{k-1}(x), \:\:n=2k.\label{eq:zeitlin_definition_even}
\end{gather}
%
\end{fact}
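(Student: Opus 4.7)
The plan is to establish the three claims of the fact separately: the product formula (\ref{exact_def_cos_minpoly}) for $\psi_n$, the degree formula, and the Chebyshev identities (\ref{eq:zeitlin_definition_odd})--(\ref{eq:zeitlin_definition_even}). The first two follow from Galois theory of cyclotomic fields, while the last reduces to a trigonometric calculation together with some combinatorial bookkeeping of roots.

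For the first two, let $\zeta_n = e^{2\pi i/n}$, whose minimal polynomial is the cyclotomic polynomial $\Phi_n$ of degree $\varphi(n)$. Since $\cos\tfrac{2\pi}{n} = \tfrac12(\zeta_n + \zeta_n^{-1})$, it lies in the real subfield $\mb{Q}(\zeta_n)\cap\mb{R}$. For $n\geq 3$ one has $|\cos\tfrac{2\pi}{n}|<1$, so $\zeta_n\notin\mb{R}$ and the polynomial $X^2 - 2\cos\tfrac{2\pi}{n}X + 1$ is the minimal polynomial of $\zeta_n$ over $\mb{Q}(\cos\tfrac{2\pi}{n})$. This forces $[\mb{Q}(\zeta_n):\mb{Q}(\cos\tfrac{2\pi}{n})]=2$, whence $[\mb{Q}(\cos\tfrac{2\pi}{n}):\mb{Q}]=\varphi(n)/2$. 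The Galois automorphisms $\sigma_k:\zeta_n\mapsto\zeta_n^{k}$ indexed by $k\in(\mb{Z}/n\mb{Z})^{*}$ send $\cos\tfrac{2\pi}{n}\mapsto\cos\tfrac{2k\pi}{n}$; because $\cos\tfrac{2k\pi}{n}=\cos\tfrac{2(n-k)\pi}{n}$, exactly $\varphi(n)/2$ distinct Galois conjugates arise, indexed by $1\leq k\leq\lfloor n/2\rfloor$ with $\gcd(k,n)=1$. The minimal polynomial is then the monic polynomial with these roots, yielding (\ref{exact_def_cos_minpoly}). The cases $n=1,2$ reduce to (\ref{eq:minpoly_12}).

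For the Chebyshev identities I would apply $T_m(\cos\phi)=\cos m\phi$ together with the sum-to-product formulas:
\begin{align*}
T_{k+1}(\cos\phi) - T_k(\cos\phi) &= -2\sin\tfrac{(2k+1)\phi}{2}\sin\tfrac{\phi}{2},\\
T_{k+1}(\cos\phi) - T_{k-1}(\cos\phi) &= -2\sin(k\phi)\sin\phi.
\end{align*}
Restricting $\phi$ to $[0,\pi]$, where $\cos$ is injective, the zeros of the right-hand sides are located at $\phi = 2m\pi/n$ for $m=0,1,\ldots,\lfloor n/2\rfloor$ (in the odd case $n=2k+1$ the $\sin\tfrac{\phi}{2}$ factor contributes only $\phi=0$, already among those zeros; in the even case $n=2k$ the $\sin\phi$ factor contributes $\phi=0,\pi$, the latter being the $m=k$ term). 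Both sides of each identity are polynomials in $x=\cos\phi$ of degree $k+1$ with leading coefficient $2^{k}$, and the $k+1$ roots on $[0,\pi]$ are distinct, hence simple.

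Finally I would match these roots to those of $\prod_{d|n}\psi_d$. Writing $d = n/\gcd(m,n)$ and $j = m/\gcd(m,n)$, I claim that as $m$ runs through $\{0,1,\ldots,\lfloor n/2\rfloor\}$, the pair $(d,j)$ enumerates bijectively the pairs with $d\mid n$, $\gcd(j,d)=1$, and $1\leq j\leq\lfloor d/2\rfloor$, together with the boundary pair $(1,0)$ and, in the even case, $(2,1)$. This bijection together with the Gauss identity $\sum_{d\mid n}\varphi(d)=n$ gives matching degrees, and comparing the leading coefficients ($2^{k}$ on the left, monic product on the right) yields (\ref{eq:zeitlin_definition_odd})--(\ref{eq:zeitlin_definition_even}). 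The main technical obstacle will be this combinatorial bookkeeping, particularly checking that the auxiliary sine factor in each formula produces no spurious roots beyond those absorbed into $\psi_1$ and $\psi_2$, and handling the boundaries $n=1,2$ cleanly.
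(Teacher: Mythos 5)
Your proposal is correct, but note that the paper itself offers no proof to compare against: the statement is labelled a Fact and attributed to \cite{zeitlin,minpolys}, consistent with the authors' stated convention of reserving ``Fact'' for previously known results. What you have written is essentially the standard Watkins--Zeitlin argument, and it holds up. The first half --- obtaining $[\mathbb{Q}(\cos\frac{2\pi}{n}):\mathbb{Q}]=\varphi(n)/2$ from the quadratic $X^2-2\cos\frac{2\pi}{n}\,X+1$ satisfied by $\zeta_n$ (irreducible over the real field $\mathbb{Q}(\cos\frac{2\pi}{n})$ since $\zeta_n\notin\mathbb{R}$ for $n\ge 3$), and identifying the Galois conjugates $\cos\frac{2k\pi}{n}$ with $1\le k\le\lfloor n/2\rfloor$, $\gcd(k,n)=1$, which are pairwise distinct because $\cos$ is injective on $[0,\pi]$ --- correctly yields (\ref{exact_def_cos_minpoly}) and the degree formula. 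The second half is also sound: both sides of (\ref{eq:zeitlin_definition_odd}) (resp.\ (\ref{eq:zeitlin_definition_even})) are polynomials of degree $k+1$ with leading coefficient $2^{k}$; your sum-to-product identities place the zeros of the Chebyshev differences at $x=\cos\frac{2m\pi}{n}$, $m=0,\dots,k$, giving $k+1$ distinct and hence simple roots; and the map $m\mapsto(d,j)=\bigl(n/\gcd(m,n),\,m/\gcd(m,n)\bigr)$ is the required bijection onto the roots of $\prod_{d\mid n}\psi_d(x)$, with the degree count on that side following from $\sum_{d\mid n}\varphi(d)=n$. The only bookkeeping worth writing out in full is that $j/d=m/n\le 1/2$ guarantees $1\le j\le\lfloor d/2\rfloor$, the extreme case $m/n=1/2$ occurring only for even $n$ and being absorbed by $\psi_2$, while $m=0$ is absorbed by $\psi_1$; you have flagged exactly this point, and the boundary cases $n=1,2$ check directly ($T_1-T_0=x-1$ and $T_2-T_0=2(x-1)(x+1)$). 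Since the paper omits the proof entirely, your write-up would stand as a self-contained justification of the Fact.
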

\begin{rem}
Note that the sum in (\ref{exact_def_cos_minpoly}) is over $1\leq k\leq \lfloor n/2\rfloor$, $\gcd(k,n)=1$ instead of $1\leq k\leq n$. It stems from the symmetry of cosine, {\it i.e.} $\cos\phi=\cos(-\phi)$. Note that for every $\phi=\frac{2k\pi}{n}$ the opposite angle is defined as $-\phi=-\frac{2k\pi}{n}=\frac{2(n-k)\pi}{n}$. By properties of the greatest common divisor \cite{hardy}, $\gcd(k,n)=1$ implies $\gcd(n-k,n)=1$. Thus every root $\cos\frac{2k\pi}{n}$, where $1\leq k\leq \lfloor n/2\rfloor$, is equal to the root $\cos\frac{2(n-k)\pi}{n}$ where $\lfloor n/2\rfloor< n-k<n$. Therefore $\cos\frac{2k\pi}{n}$ for $1\leq k\leq \lfloor n/2\rfloor$ are all the possible roots of $\psi_n(x)$.
\end{rem}\noindent
Before we formulate the lemma describing a important property of $\psi_n(x)$  we will present definition of the M\"obius function that allows us to invert (\ref{eq:zeitlin_definition_even}) and (\ref{eq:zeitlin_definition_odd}) and express $\psi_n(x)$ as a function of Chebyshev polynomials \cite{bayad}.  
\begin{definition}\label{def:mobius}
M\"obius function is an integer function taking three possible values $\mu(n) = \{-1,0,1 \}$. It depends on the prime factorisation of $k$ (this means a unique representation of $n$ as a product of prime numbers). Let $n =p_1^{n_1}\ldots p_m^{n_m}$, then (1) $\mu(n)=0$ if at least one $n_i$ is larger than one, (2) $\mu(n)=-1$ if $m$ is an odd number and $n_1=\ldots=n_m=1$ (we say that $n$ is \emph{square free}), (3) $\mu(n)=1$ if $m$ is even and $n$ is square free. \newline\noindent The M\"obius function satisfies the canonical identity \cite{hardy}
\begin{gather}\label{mobius_function_sum}
\sum_{d|n}\mu(d) = \left\lbrace\begin{array}{cc}
0,&n>1\\1,&n=1
\end{array}\right..
\end{gather}
\end{definition}
\begin{fact}\cite{bayad}\label{lem:minpoly_inverse} For $n\geq 3$, odd and $m\geq1$, $m,n\in\mb{N}$ the M\"obius inverses of (\ref{eq:zeitlin_definition_odd}) and (\ref{eq:zeitlin_definition_even}), given by the M\"obius inversion formula, are the following
\begin{gather}\label{eq:zeitlin_inverse_odd}\psi_n(x)=  \prod_{d|n}\left[ 2^{-\lfloor d/2\rfloor}\left(T_{\lfloor d/2\rfloor+1}(x)-T_{\lfloor d/2\rfloor}(x \right)\right] ^{\mu(n/d)},\\\label{eq:zeitlin_inverse_even}
\psi_{2^mn}(x)=\prod_{d|n}\left[\frac{2^{-\lfloor 2^{m-1}d\rfloor}\left(T_{ \lfloor 2^{m-1}d\rfloor+1 }(x)-T_{\lfloor 2^{m-1}d\rfloor-1}(x \right)  }{2^{-\lfloor 2^{m-2}d\rfloor }\left(T_{ \lfloor 2^{m-2}d\rfloor+1 }(x)-T_{\lfloor 2^{m-2}d\rfloor-1}(x \right) } \right]^{\mu(n/d)},\quad m>1,\\
\label{eq:zeitlin_inverse_even1}\psi_{2^mn}(x)=\prod_{d|n}\left[\frac{2^{- d}\left(T_{ \lfloor d\rfloor+1 }(x)-T_{\lfloor d\rfloor-1} (x)\right)  }{2^{-\lfloor d/2\rfloor }\left(T_{ \lfloor d/2\rfloor+1 }(x)-T_{\lfloor d/2\rfloor} (x)\right) } \right]^{\mu(n/d)},\quad m=1.
\end{gather}
\end{fact}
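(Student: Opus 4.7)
The plan is to apply the multiplicative form of the M\"obius inversion formula to the identities (\ref{eq:zeitlin_definition_odd})--(\ref{eq:zeitlin_definition_even}). Recall that if $F(n)=\prod_{d|n}f(d)$ for every positive integer $n$, then $f(n)=\prod_{d|n}F(d)^{\mu(n/d)}$; this is the multiplicative analogue of the additive M\"obius inversion and follows from (\ref{mobius_function_sum}) applied term by term. I would first package both cases of Fact \ref{def:psi} into the single function $G(n):=\prod_{d|n}\psi_d(x)$, explicitly
\begin{gather*}
G(n)=\begin{cases}2^{-\lfloor n/2\rfloor}\bigl(T_{\lfloor n/2\rfloor+1}(x)-T_{\lfloor n/2\rfloor}(x)\bigr),& n\text{ odd},\\ 2^{-n/2}\bigl(T_{n/2+1}(x)-T_{n/2-1}(x)\bigr),& n\text{ even}.\end{cases}
\end{gather*}
After inversion one has $\psi_n(x)=\prod_{d|n}G(d)^{\mu(n/d)}$, and the remaining task is just to simplify this product into the three claimed forms.

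For $n\geq 3$ odd every divisor $d|n$ is odd, only the first branch of $G$ is ever invoked, and $(d-1)/2=\lfloor d/2\rfloor$ makes the product coincide with (\ref{eq:zeitlin_inverse_odd}). For the even case I would write $n=2^m n'$ with $n'$ odd and $m\geq 1$, and parameterise divisors as $d=2^{j}d'$ with $0\leq j\leq m$, $d'|n'$. Then $n/d=2^{m-j}(n'/d')$; since $n'/d'$ is odd, $\mu(n/d)\neq 0$ forces $m-j\in\{0,1\}$ with $n'/d'$ squarefree, and in that case $\mu(2(n'/d'))=-\mu(n'/d')$.

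Consequently only the divisors $d=2^m d'$ and $d=2^{m-1}d'$ with $d'|n'$ survive, and the product collapses to
\begin{gather*}
\psi_{2^m n'}(x)=\prod_{d'|n'}\left[\frac{G(2^m d')}{G(2^{m-1}d')}\right]^{\mu(n'/d')}.
\end{gather*}
Since $2^m d'$ is always even (as $m\geq 1$), the numerator always invokes the even branch of $G$ and reproduces the numerator of (\ref{eq:zeitlin_inverse_even}). In the denominator I would use the even branch when $m\geq 2$, producing (\ref{eq:zeitlin_inverse_even}), and the odd branch when $m=1$, where $2^{m-1}d'=d'$ is odd and the floor expressions $\lfloor d'/2\rfloor$ of (\ref{eq:zeitlin_inverse_even1}) appear naturally. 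The only delicate point is this bifurcation at $m=1$ where the denominator switches branches of $G$; once the surviving divisors have been correctly identified by tracking the vanishing of $\mu(n/d)$, everything else is a routine substitution.
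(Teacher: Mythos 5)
The paper does not actually prove this statement: it is labelled a Fact and attributed entirely to \cite{bayad}, so there is no in-paper argument to compare yours against. Your derivation is correct and is the standard one. The multiplicative M\"obius inversion $f(n)=\prod_{d|n}F(d)^{\mu(n/d)}$ applied to $G(n):=\prod_{d|n}\psi_d(x)$ is legitimate here because the identities (\ref{eq:zeitlin_definition_odd})--(\ref{eq:zeitlin_definition_even}) give a closed form for $G(n)$ for \emph{every} $n\geq 1$ (including $n=1,2$, as one checks directly from $\psi_1=x-1$, $\psi_2=x+1$), and the $G(d)$ are nonzero rational functions, so the inversion takes place in a multiplicative abelian group. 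Your bookkeeping in the even case is the crux and it is right: writing $n=2^m n'$ and $d=2^j d'$, squarefreeness of $n/d=2^{m-j}(n'/d')$ kills all terms except $j\in\{m-1,m\}$, and $\mu(2\cdot n'/d')=-\mu(n'/d')$ turns the surviving pair into the quotient $G(2^m d')/G(2^{m-1}d')$; the bifurcation between (\ref{eq:zeitlin_inverse_even}) and (\ref{eq:zeitlin_inverse_even1}) is exactly the statement that $2^{m-1}d'$ is even for $m\geq 2$ but odd for $m=1$, forcing a switch between the two branches of $G$ in the denominator. In effect you have supplied a proof that the paper delegates to its reference, and I see no gap in it.
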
\noindent
The following property of $\psi_n(x)$ is crucial in proving theorem \ref{thm:main}.
\begin{lem} \label{fact:coeffs_minpoly_cos}
At least one coefficient of $\psi_n(x)$ is non-integer if $n\notin\{1,2,4\}$. 
\end{lem}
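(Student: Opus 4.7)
The plan is to single out one coefficient, the constant term $\psi_{n}(0)$, and show that it fails to be an integer whenever $n\notin\{1,2,4\}$; since $\psi_{n}\in\mathbb{Q}[x]$, this immediately yields the lemma. From the product formula (\ref{exact_def_cos_minpoly}), for $n\ge 3$ one has
\[
\psi_{n}(0)\;=\;(-1)^{d}\!\!\prod_{\substack{1\le k\le \lfloor n/2\rfloor\\ \gcd(k,n)=1}}\!\!\cos\frac{2k\pi}{n},\qquad d=\frac{\varphi(n)}{2},
\]
so the task reduces to bounding each cosine factor strictly away from $0$ and strictly below $1$ in absolute value.

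First I would verify that no factor vanishes: $\cos(2k\pi/n)=0$ forces $n\mid 4k$, which combined with $\gcd(k,n)=1$ gives $n\mid 4$, i.e.\ $n\in\{1,2,4\}$, all of which are excluded by hypothesis. Next I would show every factor satisfies $|\cos(2k\pi/n)|<1$; equality would require $2k\pi/n\equiv 0$ or $\pi\pmod{2\pi}$, but $1\le k<n$ rules out the first possibility, while the second demands $k=n/2$ and, together with $\gcd(k,n)=1$, forces $n=2$, again excluded.

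Consequently, for $n\ge 3$ with $n\neq 4$ every factor of the displayed product lies in $(-1,0)\cup(0,1)$, so $0<|\psi_{n}(0)|<1$; a nonzero rational number of absolute value strictly less than one cannot be an integer, hence the constant coefficient of $\psi_{n}$ is already non-integer. The argument sidesteps the Chebyshev and M\"obius machinery of Section \ref{app:cosines} entirely, and the only real content is the two elementary divisibility checks above that prevent $\cos(2k\pi/n)\in\{0,\pm 1\}$ under the primitivity condition $\gcd(k,n)=1$; there is no genuine obstacle.
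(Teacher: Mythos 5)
Your proof is correct, and it takes a genuinely different and much more elementary route than the paper. Both arguments ultimately target the same coefficient, the constant term $c_0=\psi_n(0)$, and both conclude by observing that a nonzero rational of absolute value strictly less than $1$ cannot be an integer. The difference is in how $0<|c_0|<1$ is established. The paper extracts $c_0$ from the M\"obius inversion of the Chebyshev identities (\ref{eq:zeitlin_inverse_odd})--(\ref{eq:zeitlin_inverse_even1}), writes it as an explicit (signed) power of $2$, and then runs a three-way case analysis ($n$ an odd prime, odd composite, even composite) with the telescoping sum $(1-p_1)\cdots(1-p_m)$ to decide the sign of the exponent. You instead read $c_0$ directly off the product formula (\ref{exact_def_cos_minpoly}) as $(-1)^d\prod\cos\frac{2k\pi}{n}$ and bound each factor: the divisibility checks $n\mid 4k,\ \gcd(k,n)=1\Rightarrow n\mid 4$ (excluding a zero factor) and $n\mid 2k,\ \gcd(k,n)=1\Rightarrow n\mid 2$ (excluding a factor of modulus $1$) are both sound, and a finite product of nonzero factors each of modulus strictly below $1$ indeed lands in $(0,1)$ in absolute value. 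Since $\psi_n\in\mathbb{Q}[x]$, this finishes the lemma, and it delivers exactly what Section \ref{sec:proofs} consumes (namely that some $c_i\notin\mathbb{Z}$). What the paper's heavier computation buys beyond this is the precise value of $c_0$ as $\pm 2^{-j}$ with an explicit positive exponent $j$ expressed through the square-free divisors of $n$ — a quantitative refinement that is not needed for the statement being proved. For the lemma as stated, your argument is complete and considerably shorter.
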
\begin{proof}Note that $\cos\frac{2k\pi}{n}$, $\gcd(k,n)=1$ are integers for $n\in\{1,2,4 \}$, thus the corresponding minimal polynomials $\psi_n(x)$ belong to $\mb{Z}[x]$. In other cases one can distinguish the following situations: 1) $n$ is an odd prime number, 2) $n$ is an odd composite number, 3) $n$ is an even composite number.
\begin{enumerate}
\item In the case 1) $n$ has exactly two divisors and the formula (\ref{eq:zeitlin_definition_odd}) simplifies to
\begin{gather}\label{eq:two_product}
\psi_1(x)\psi_{n}(x)=(x-1)\psi_{n}(x)=2^{-\lfloor n/2\rfloor}  \left(T_{\lfloor n/2\rfloor+1}(x)-T_{\lfloor n/2\rfloor}(x) \right).
\end{gather}Note that $T_{\lfloor n/2\rfloor+1}(x)-T_{\lfloor n/2\rfloor}(x)$ is a difference of polynomials of the even and the odd degrees, therefore by fact \ref{fact:chebysh_coeffs} it has a free term equal $\pm 1$ hence the free term of the right hand side is $\pm\frac{1}{2^{\lfloor n/2\rfloor}}$. Note that the free term of left hand side is determined by the free term of $\psi_n(x)$. Comparing the left and right side of (\ref{eq:two_product}) one can see that the free term of $\psi_n(x)$, {\it i.e.} $c_0=\pm\frac{1}{2^{\lfloor n/2\rfloor}}$, must be a rational number since we consider $n\leq 3$.
\item In the case 2) we will prove that $\psi_n(x)$ has a non-integer free term by applying formula (\ref{eq:zeitlin_inverse_odd}) and using properties of the M\"obius function. Let $n=p_1^{n_1}\ldots p_m^{n_m}$ be the prime factorisation of $n$. Let $\mathcal{D}_n^+$ and $\mathcal{D}_n^-$ be the sets of all square free (as in definition \ref{def:mobius}) divisors of $n$ that have even and odd number of prime divisors respectively. By (\ref{mobius_function_sum}) we have $|\mathcal{D}_{n}^+|=|\mathcal{D}_{n}^-|$. Note that the free term of the right hand side of (\ref{eq:zeitlin_inverse_odd}) can be written in the form
\begin{gather}\label{eq:coeff_co_case2}
c_0=\pm \prod_{d|n} \left(\frac{1}{2^{\lfloor d/2\rfloor}} \right)^{\mu(n/d)}  =\pm \prod_{\frac{n}{d}\in \mathcal{D}_n^+\cup\mathcal{D}_n^-} \left(\frac{1}{2^{\lfloor d/2\rfloor}} \right)^{\mu(n/d)}=\frac{2^{\sum_{\frac{n}{d}\in  \mathcal{D}_n^-} \lfloor d /2  \rfloor }}{2^{\sum_{\frac{n}{d}\in  \mathcal{D}_n^+} \lfloor d/2\rfloor} } .
\end{gather}
We next raise $c_0$ to the power $p=2p_1\cdot\ldots \cdot p_m$


\begin{gather}\label{eq:co_neq}
c_0^{p}= \frac{2^{\sum_{\frac{n}{d}\in  \mathcal{D}_n^-} p\lfloor d /2  \rfloor }}{2^{\sum_{\frac{n}{d}\in  \mathcal{D}_n^+} p\lfloor d/2\rfloor} } =\frac{2^{\sum_{\frac{n}{d}\in  \mathcal{D}_n^-} p( d-1 )/2   }}{2^{\sum_{\frac{n}{d}\in  \mathcal{D}_n^+} p( d-1) /2  } } =\frac{ 2^{ -\frac{pk}{2}+\sum_{\frac{n}{d}\in  \mathcal{D}_n^-} \frac{pd}{2} }}{2^{ -\frac{pk}{2}+\sum_{\frac{n}{d}\in  \mathcal{D}_n^+} \frac{pd}{2} }}= \frac{2^{\sum_{\frac{n}{d}\in  \mathcal{D}_n^-} \frac{pd}{2} }}{2^{\sum_{\frac{n}{d}\in  \mathcal{D}_n^+} \frac{pd}{2}}}.
\end{gather}  

Note that $c_0$ is non-integer if and only if $c_0^p<1$. In order to find the appropriate condition we use the fact  that if $x=\frac{n}{d}$ belongs to $\mc{D}_n^-$ or $\mc{D}_n^+$ then 
\begin{gather}\label{eq:delta_i} 
p\cdot d=p \frac{n}{x}=2n\cdot y,
\end{gather}
where $y$ is the square-free product of such prime divisors of $n$ that do not appear in prime factorisation of $x$. In particular, if $m$ is an even number and $x\in\mc{D}_n^+$, then $y$ must also belong to $\mc{D}_n^+$.  Similarly $x\in\mc{D}_n^-$ implies that $y\in\mc{D}_n^-$. When $m$ is odd one easily checks that  $x\in\mc{D}_n^+$ implies that $y\in\mc{D}_n^-$ and {\it vice versa}.

Let us consider the case when $m$ is an even number. Taking $pd=2ny$ one can rewrite (\ref{eq:co_neq}) as
\begin{gather}\label{eq:transformed_even}
c_0^p=\frac{2^{\sum_{\frac{n}{d}\in  \mathcal{D}_n^-} \frac{pd}{2} }}{2^{\sum_{\frac{n}{d}\in  \mathcal{D}_n^+} \frac{pd}{2}}}=
\frac{2^{\sum_{y\in  \mathcal{D}_n^-}n y }}{2^{\sum_{y\in  \mathcal{D}_n^+} ny}}=
\frac{2^{n\sum_{y\in  \mathcal{D}_n^-} y }}{2^{n\sum_{y\in  \mathcal{D}_n^+} y}}.
\end{gather}
As one can easily see $c_0^p<1$ if the following holds
\begin{gather}
\sum_{y\in  \mathcal{D}_n^+} y-\sum_{y\in  \mathcal{D}_n^-} y>0.
\end{gather}
Note that this expression is equivalent to the product
\begin{gather}
\sum_{y\in  \mathcal{D}_n^+} y-\sum_{y\in  \mathcal{D}_n^-} y=(1-p_1)\ldots(1-p_m),
\end{gather}which is always larger than zero if $m$ is  even, thus $c_0$ is non-integer in this case.

Next let us consider $n$ such, that $m$ is an odd number. Doing {\it mutatis mutandis} to the case when $m$ is even one can transform (\ref{eq:co_neq}) to the form

\begin{gather}\label{eq:transformed_odd}
c_0^p=\frac{2^{\sum_{\frac{n}{d}\in  \mathcal{D}_n^-} pd }}{2^{\sum_{\frac{n}{d}\in  \mathcal{D}_n^+} pd}}=
\frac{2^{\sum_{y\in  \mathcal{D}_n^+}n y }}{2^{\sum_{y\in  \mathcal{D}_n^-} ny}}=
\frac{2^{n\sum_{y\in  \mathcal{D}_n^+} y }}{2^{n\sum_{y\in  \mathcal{D}_n^-} y}},
\end{gather}thus the condition for $c_0^p<1$ is given by 
\begin{gather}
\sum_{y\in  \mathcal{D}_n^+} y-\sum_{y\in  \mathcal{D}_n^-} y=(1-p_1)\ldots(1-p_m)<0.
\end{gather}Note that this  is always satisfied if $m$ is odd, which means that $c_0^p$ is indeed smaller than one. This way we have proven lemma \ref{fact:coeffs_minpoly_cos} for the case when $n>1$ and $n$ is odd.

\item In  case 3) we will use the similar approach as for  case 2). Recall that every even number $n$ can be represented as $n=2^\eta k$, where $k$ - odd and $\eta\in\mb{Z}_+$. This allows us to define $\psi_n(x)$ by the formula (\ref{eq:zeitlin_inverse_even}). Let us define the sets $\mc{D}_n^+$ and $\mc{D}_n^-$ like previously. Using fact \ref{fact:chebysh_coeffs} one can extract the free terms from (\ref{eq:zeitlin_inverse_even}) and (\ref{eq:zeitlin_inverse_even1}) as 
\begin{gather}\label{eq:coeff_even_1}
c_0=\prod_{d|k}\left(\frac{2}{2^{2^{\eta-2}d}}\right)^{\mu(k/d)}=\frac{2^{2^{\eta-2}\sum_{\frac{n}{d}\in\mc{D}_n^- } d}}{2^{2^{\eta-2}\sum_{\frac{n}{d}\in\mc{D}_n^+ } d}},\quad \eta>1,\\
\label{eq:coeff_even_2} c_0=\prod_{d|k}\left(\frac{1}{2^{\lfloor d/2\rfloor}}\right)^{\mu(k/d)}=\frac{2^{ \sum_{\frac{n}{d}\in\mc{D}_n^- } \lfloor d/2\rfloor}}{2^{\sum_{\frac{n}{d}\in\mc{D}_n^+ } \lfloor d/2\rfloor }},\quad \eta=1.
\end{gather}By properties of the M\"obius function all the sums and products are over the same number of divisors. Note that (\ref{eq:coeff_even_2}) and (\ref{eq:coeff_co_case2}) are exactly equal, whereas (\ref{eq:coeff_even_1}) has a very similar form to (\ref{eq:coeff_co_case2}). Using very similar  reasoning as in the case 2) we obtain immediately that $c_0$ is non-integer unless $n=2,4$,
which completes the proof.  
\end{enumerate}
\end{proof}
\paragraph*{Example:} In this paragraph we will illustrate the method presented in the proof of lemma \ref{fact:coeffs_minpoly_cos} with the example for $n=15$. First, $\mc{D}^+_n=\{ 1,15\}$ and $\mc{D}^-_n=\{ 3,5\}$. The coefficient $c_0$ of $\psi_{15}(x)$ is given by
\begin{gather}\label{eq:example1}
c_0=\frac{2^{\lfloor 5/2\rfloor+\lfloor 3/2\rfloor}}{2^{\lfloor 15/2\rfloor+\lfloor 1/2\rfloor}}.
\end{gather}
Using the reasoning presented for the case 2) we raise $c_0$ to the power $p=30$ and obtain the condition
\begin{gather*}
c_0^{30}=\frac{2^{15(5+3)-15}}{2^{15(15+1)-15}}=\frac{2^{15\sum_{y\in\mc{D}_{15}^- } y}}{2^{15\sum_{y\in\mc{D}_{15}^+ } y}}<1\Leftrightarrow \sum_{y\in\mc{D}_{15}^+ } y-\sum_{y\in\mc{D}_{15}^- } y=16-8>0.
\end{gather*}Thus we have shown that $c_0<1$.  On the other hand one can see immediately from definition (\ref{eq:example1}) than $c_0$ for $\psi_{15}(x)$ is equal to $2^{-8}\notin\mb{Z}$.

The proof of theorem \ref{thm:main} presented in this paper is based also on properties of minimal polynomials for $2\cos\frac{\pi}{n}$. However, before we describe them, we will recall the notion of cyclotomic polynomials. Since $2\cos\frac{k\pi}{n}$ is a sum of two roots of unity $e^{ik\pi/n}$ and $e^{-ik\pi/n}$ one can conclude that properties of minimal polynomials for double cosines depend on properties of cyclotomic polynomials.
\begin{definition}\label{def:cyclotomic}
A cyclotomic polynomial $\Phi_n(x)$ is the polynomial with integer coefficients, irreducible over $\mb{Q}$, defined as 
\begin{gather}\label{eq:cyclotomic_def}\Phi_n(x)=\prod_{\small \begin{array}{c}
 1\leq k\leq n\\\gcd(k,n)=1
\end{array} }\left(x-e^{2i\pi k/n} \right),
\end{gather}and satisfying the identity $ 1-x^n=\prod_{d|n}\Phi_d(x)$. $\Phi_n(x)$ is the minimal polynomial of an $n$-th root of identity. 
\end{definition}\noindent
The basic facts concerning minimal polynomials for double cosines are the following:
\begin{lem}\cite{bayad,minpolys2}\label{lem:minpoly2}
The minimal polynomial for $2\cos\frac{\pi}{n}$ is a polynomial defined as
\begin{gather}
\eta_n(x) = 2^{\deg\psi_{2n}(x)}\psi_{2n}\left(\frac{x}{2}\right).
\label{def:minpoly_2cosphi}
\end{gather}
Using fact \ref{def:psi} one can write it explicitly as:  \begin{gather}\label{eq:def_minpolys2cos}
\eta_1(x)=x+2,\\
\eta_n(x)=\prod_{\small\begin{array}{c}
1\leq k\leq n\\ \gcd(k,2n)=1
\end{array}}\left(x-2\cos\frac{k\pi}{n} \right),\:n\leq 2
\end{gather}

\end{lem}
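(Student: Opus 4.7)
The plan is to obtain $\eta_n$ by a direct change of variables from $\psi_{2n}$, then verify the three properties (monic, irreducible, vanishes at $2\cos\frac{\pi}{n}$) that characterise a minimal polynomial, and finally unpack the product expression.

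First I would observe that $2\cos\frac{\pi}{n}=2\cos\frac{2\pi}{2n}$. By Fact \ref{def:psi}, $\cos\frac{2\pi}{2n}$ is a root of the monic polynomial $\psi_{2n}\in\mathbb{Q}[x]$, which is irreducible over $\mathbb{Q}$ (being, by definition, the minimal polynomial of $\cos\frac{2\pi}{2n}$). Set $d=\deg\psi_{2n}$. Then $2\cos\frac{\pi}{n}$ is a root of $p(x)=\psi_{2n}(x/2)$; since $\psi_{2n}$ is monic of degree $d$, the polynomial $p$ has leading coefficient $2^{-d}$, so $\eta_n(x):=2^d\,\psi_{2n}(x/2)$ is a monic polynomial of degree $d$ in $\mathbb{Q}[x]$ that annihilates $2\cos\frac{\pi}{n}$.

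Next I would check irreducibility. The substitution $x\mapsto x/2$ is an invertible $\mathbb{Q}$-algebra automorphism of $\mathbb{Q}[x]$ (with inverse $x\mapsto 2x$), and multiplication by the nonzero scalar $2^d$ is a unit in $\mathbb{Q}[x]$, so any nontrivial factorisation $\eta_n=fg$ in $\mathbb{Q}[x]$ would yield a nontrivial factorisation $\psi_{2n}(x)=2^{-d}f(2x)g(2x)$, contradicting the irreducibility of $\psi_{2n}$. Hence $\eta_n$ is monic and irreducible over $\mathbb{Q}$ with $2\cos\frac{\pi}{n}$ as a root, which by definition makes it the minimal polynomial of $2\cos\frac{\pi}{n}$ and establishes (\ref{def:minpoly_2cosphi}).

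Finally, for the explicit product form I would plug (\ref{exact_def_cos_minpoly}) into this identity. For $n\geq 2$ (so that $2n\geq 3$),
\begin{gather*}
\psi_{2n}(x)=\prod_{\substack{1\leq k\leq n\\ \gcd(k,2n)=1}}\left(x-\cos\tfrac{k\pi}{n}\right),
\end{gather*}
and distributing the $d$ factors of $2$ in $\eta_n(x)=2^d\psi_{2n}(x/2)$ one per linear factor converts each $\bigl(x/2-\cos\frac{k\pi}{n}\bigr)$ into $\bigl(x-2\cos\frac{k\pi}{n}\bigr)$, giving (\ref{eq:def_minpolys2cos}). The degenerate case $n=1$ is handled directly from (\ref{eq:minpoly_12}): $\eta_1(x)=2\,\psi_2(x/2)=2(x/2+1)=x+2$, which matches $2\cos\pi=-2$. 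I do not anticipate a substantive obstacle here, as the only thing to watch is that degrees and irreducibility are preserved under the affine rescaling $x\mapsto x/2$, which they manifestly are.
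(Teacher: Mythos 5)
Your proof is correct, but it takes a different route from the paper's. The paper establishes minimality via the companion-matrix formalism: it writes $M_{2\cos\frac{k\pi}{n}}=2\,M_{\cos\frac{k\pi}{n}}$, invokes Fact \ref{fact-degree} to conclude that the characteristic polynomial of this matrix is already the minimal polynomial (a degree-counting argument through $[\mathbb{Q}(2\cos\frac{k\pi}{n}):\mathbb{Q}]=[\mathbb{Q}(\cos\frac{k\pi}{n}):\mathbb{Q}]$), and then evaluates $\det\bigl(2M-xI\bigr)=2^{d}\,\chi_{M}\bigl(\frac{x}{2}\bigr)$ to land on $2^{d}\psi_{2n}\bigl(\frac{x}{2}\bigr)$. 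You bypass the matrices entirely: you note that $2\cos\frac{\pi}{n}$ is annihilated by $\psi_{2n}(x/2)$, rescale to make it monic, and observe that the affine substitution $x\mapsto x/2$ is a $\mathbb{Q}$-algebra automorphism of $\mathbb{Q}[x]$, so irreducibility of $\psi_{2n}$ transfers to $\eta_n$; monic plus irreducible plus annihilating forces $\eta_n$ to be the minimal polynomial. Both arguments are sound. Yours is more elementary and self-contained (it needs only the definition of a minimal polynomial and the invariance of irreducibility under linear change of variable), while the paper's version stays inside the companion-matrix framework it has set up in Section \ref{sec:comp_matrix} and reuses Fact \ref{fact-degree}, which keeps the exposition uniform with the later proof of Theorem \ref{thm:main}. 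You also handle the degenerate case $n=1$ explicitly and correctly read the paper's condition ``$n\leq 2$'' in (\ref{eq:def_minpolys2cos}) as the evident typo for $n\geq 2$; no gap there.
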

\begin{proof}
Note that for arbitrary $1\leq k\leq n $, s.t. $\gcd(k,2n)=1$, $\cos\frac{k\pi}{n}$ is a root of the polynomial $\psi_{2n}(x)$ of a degree $d=\deg\psi_{2n}(x)=\frac{\varphi(2n)}{2}$ with the coefficients $c_0,\ldots,c_{d-1},c_d$. Since $2\cos\frac{\pi}{n}$ is a product of $2$ and a root of $\cos\frac{k\pi}{n}$, the companion matrix $M_{2\cos\frac{k\pi}{n}}$ is given by $M_{2\cos\frac{k\pi}{n}} = 2\cdot M_{\cos\frac{k\pi}{n}}$ (\ref{def:comp_matrix_identities}) and by fact \ref{fact-degree} the minimal polynomial $\eta_n(x)$ is equal to the characteristic polynomial 
$\chi_{M_{2\cos\frac{k\pi}{n}}}$. Let us write this matrix explicitly. The polynomial $\eta_n(x)$ is given by
\begin{gather}\label{def:minpoly2cosphi_matrix}
\eta_n(x)=\chi_{M_{2\cos\frac{k\pi}{n}}} =\det \left(\begin{array}{cccc}-x&0&\ldots&-2c_0\\2&-x&\ldots&-2c_1\\0&2&\ldots&-2c_2\\\vdots&\vdots&\ddots&\vdots\\0&\ldots&2&-2c_{d-1}-x\end{array} \right).
\end{gather} 
The substitution $x\to 2y$ transforms (\ref{def:minpoly2cosphi_matrix}) to the form $\eta_n(2y)=\chi_{2 M_{\cos\frac{k\pi}{n}}}$. Since $M_{\cos\frac{k\pi}{n}}$ is a $d\times d$ matrix we arrive at
\begin{gather} 
\eta_n(2y)=\det 2(M_{\cos\frac{k\pi}{n}}-yI) =2^d \psi_{2n}(y) =2^d \psi_{2n}\left(\frac{x}{2} \right),
\end{gather}
which is exactly (\ref{def:minpoly_2cosphi}). The proof is complete
\end{proof}
\noindent Another property of $\eta_n(x)$ that is crucial for the proof of theorem \ref{thm:main} is proven in the following. 
\begin{lem}\label{fact:minpoly_2cos_coeffs}
All the coefficients of $\eta_n(x)$ are integers.
\end{lem}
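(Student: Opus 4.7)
The plan is to exhibit $\eta_n(x)$ as a divisor of a monic polynomial with integer coefficients and then conclude by the content/primitivity argument underlying Gauss's lemma (lemma~\ref{thm:gauss}).

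First, introduce the auxiliary polynomials $V_k(x):=2\,T_k(x/2)$. Substituting $x\mapsto x/2$ in the Chebyshev recurrence (\ref{def:chebyshev}) and multiplying by $2$ yields the recurrence $V_k(x)=x\,V_{k-1}(x)-V_{k-2}(x)$ with $V_0(x)=2$ and $V_1(x)=x$. A straightforward induction shows that each $V_k$ is a monic polynomial of degree $k$ belonging to $\mathbb{Z}[x]$, and identity (\ref{def:chebyshev2}) gives $V_k(2\cos\theta)=2\cos(k\theta)$.

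Next I claim that $\eta_n$ divides $V_n(x)+2$ in $\mathbb{Q}[x]$. By lemma~\ref{lem:minpoly2} the roots of $\eta_n$ are the numbers $2\cos(k\pi/n)$ with $1\le k\le n$ and $\gcd(k,2n)=1$; any such $k$ is odd, so $V_n\bigl(2\cos(k\pi/n)\bigr)=2\cos(k\pi)=-2$. Thus $V_n+2$ vanishes at every root of $\eta_n$, and since $\eta_n$ is the minimal polynomial over $\mathbb{Q}$ of any such root, it follows that $\eta_n$ divides $V_n+2$ in $\mathbb{Q}[x]$.

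Finally, write $V_n(x)+2=\eta_n(x)\,q(x)$ in $\mathbb{Q}[x]$. Since $V_n+2$ and $\eta_n$ are monic, so is $q$. Using (\ref{eq:primitive_associate}), write $\eta_n=a^{-1}\tilde\eta_n$ and $q=b^{-1}\tilde q$ with $\tilde\eta_n,\tilde q\in\mathbb{Z}[x]$ primitive with positive leading coefficients; from monicity of $\eta_n$ and $q$, the numbers $a,b$ are precisely the (positive integer) leading coefficients of $\tilde\eta_n$ and $\tilde q$. Multiplying gives $ab\,(V_n+2)=\tilde\eta_n\,\tilde q$. By fact~\ref{fact:monic} the right-hand side is primitive, while the left-hand side has content $ab$ because $V_n+2\in\mathbb{Z}[x]$ is monic and therefore primitive. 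Hence $ab=1$, which forces $a=b=1$ and $\eta_n=\tilde\eta_n\in\mathbb{Z}[x]$. The only nontrivial computational step is the induction establishing $V_k\in\mathbb{Z}[x]$; the remaining content argument is routine, so I do not anticipate a serious obstacle.
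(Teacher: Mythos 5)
Your proof is correct, but it takes a genuinely different route from the paper's. The paper writes $2\cos\frac{k\pi}{n}=e^{ik\pi/n}+e^{-ik\pi/n}$ as a sum of two roots of the cyclotomic polynomial $\Phi_{2n}$ and invokes the companion-matrix construction (\ref{def:comp_matrix_identities}): the characteristic polynomial of $M_{\Phi_{2n}}\otimes I+I\otimes M_{\Phi_{2n}}$ is a monic integer polynomial annihilating $2\cos\frac{\pi}{n}$, and the Gauss lemma (lemma \ref{thm:gauss}) then forces the monic factor $\eta_n$ to have integer coefficients. You arrive at the same final Gauss-lemma step by a more elementary and more explicit path: the polynomial $V_n(x)+2=2T_n(x/2)+2$ is monic of degree $n$ with integer coefficients (via the recurrence $V_k=xV_{k-1}-V_{k-2}$), and it vanishes at every root $2\cos\frac{k\pi}{n}$, $k$ odd, of $\eta_n$, so $\eta_n$ divides it in $\mathbb{Q}[x]$; your content computation then closes the argument. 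What your approach buys is the complete avoidance of cyclotomic polynomials and of the tensor-product companion matrix, whose characteristic polynomial has degree $\varphi(2n)^2$, far larger than your degree-$n$ annihilator, which moreover is written down explicitly. Two cosmetic remarks: the claim that ``each $V_k$ is monic'' fails for $V_0=2$, but monicity for $k\ge 1$ is all you use and follows from the induction started at $V_1=x$, $V_2=x^2-2$; and you only need $\eta_n$ to be the minimal polynomial of a single root (say $2\cos\frac{\pi}{n}$, which is how the paper defines it) to obtain the divisibility, so no conjugacy argument over all roots is required.
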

\begin{proof}
In order to prove this fact we use lemma \ref{thm:gauss} and properties of cyclotomic polynomials. First, note that $2\cos\frac{k\pi}{n}=2\cos\frac{2k\pi}{2n},\;\gcd(k,2n)=1$ can be written as a sum $e^{ik\pi/n}+e^{- ik\pi/n}$, where both $e^{ik\pi/n}$ and $e^{-ik\pi/n}$ are roots of the cyclotomic polynomial $\Psi_{2n}(x)$, whose companion matrix will be denoted by $M_{\Psi_{2n}(x)}$. Since $\Psi_{2n}(x)$ is a polynomial with integer coefficients, $M_{\Psi_{2n}(x)}$ has only integer entries. 

Recall that using (\ref{def:comp_matrix_identities})  $2\cos\frac{\pi}{n}$ is a root of the characteristic polynomial of the matrix defined as 
\begin{gather}
M=M_{\Psi_{2n}(x)}\otimes I+I\otimes M_{\Psi_{2n}(x)},
\end{gather}and $\chi{_M}$ belongs obviously to $\mb{Z}[x]$. The degree of the minimal polynomial of $2\cos\frac{\pi}{n}$ is smaller than the degree of $\chi_{M_{2\cos\frac{\pi}{n}}}$ as $\mb{Q}(2\cos\frac{k\pi}{n})\subset\mb{Q}(e^{ik\pi/n})$. This means, the characteristic polynomial of $M$ can be factorized as a product of at least two polynomials $$\chi_{M}=\eta_{n}(x)\cdot p(x),\:\:\mr{where\:in\:general}\:\: p(x),\eta_n(x)\in \mathbb{Q}[x].$$Since $\chi_{M}$ belongs to  $\mb{Z}[x]$, by lemma
\ref{thm:gauss} the polynomials $\eta_{n}(x)$ and $p(x)$ have integer coefficients. As $\chi_{M}$ is monic, $p(x)$ and $\eta_{n}(x)$ are also monic and the result follows. 
\end{proof}
\section{The proof of  theorem  1}\label{sec:proofs}
The aim of this section is to decide, using the tools presented in previous sections, for which $\phi=\frac{2k\pi}{n}$ the left hand side of the equation 
\begin{gather}\label{eq:gamma_cos_plus_1}2\cos\frac{\gamma}{2}=\cos\phi+1,
\end{gather}
is a root of the minimal polynomial $\eta_m(x)$ for some $m\in\mb{N}$. 


The minimal polynomial $m_{\cos\phi+1}(x)$ can be found using the companion matrix formalism explained in section \ref{sec:comp_matrix}. It follows that $\cos\phi+1$ is a root of the characteristic polynomial of the matrix $M_{\cos\phi+1}=M_{\cos\phi}+ I$, where
\begin{gather*}
M_{\cos\phi} = \left(\begin{array}{cccc}
0&0&\ldots&-c_0\\1&0&\ldots&-c_1\\0&1&\ldots&-c_2\\\vdots&\vdots&\ddots&\vdots\\0&\ldots&0&-c_{d-1}
\end{array} \right),
\end{gather*}
Note that the field extensions $\mb{Q}({2\cos\frac{\gamma}{2}})=\mathbb{Q}(\cos\phi+1)$ and $\mathbb{Q}(\cos\phi)$ are of the same algebraic degree (see fact \ref{fact-degree}), therefore the characteristic polynomial of $M_{\cos\phi+1}$ is exactly the minimal polynomial $m_{2\cos\frac{\gamma}{2}}(x)$. One can compute $\chi_{M_{\cos\phi+1}}$ as the determinant of the following matrix:
\begin{gather}
 M_{\cos\phi+1}-Ix = \left(\begin{array}{cccccc}
1-x&0&0&\ldots&0&-c_0\\1&1-x&0&\ldots&0&-c_1\\0&1&1-x&\ldots&0&-c_2\\
\vdots&\vdots&\vdots&\ddots&\vdots&\vdots\\0&0&\ldots&1&1-x&-c_{d-2}\\0&0&\ldots&0&1&1-c_{d-1}-x
\end{array} \right).
\label{def:comp_matrox_general}
\end{gather} Expansion with respect to the first row gives us to the following expression:
\begin{gather}
m_{2\cos\frac{\gamma}{2}}(x)=\chi_{M_{\cos\phi+1}}(x)=\det(M_{\cos\phi+1}-Ix)=\sum_{i=0}^d \omega_i \cdot x^i=\\ = -\left[\sum_{i=0}^{d-3} c_i (-1)^{d+i+1} (1-x)^i \right] + (1-x)^{d-2}[x^2+x(c_{d-1}-2)+c_{d-2}-c_{d-1}+1].
\label{def:companion_determinant}
\end{gather}
One can simplify (\ref{def:companion_determinant}) using the binomial formula. As a result we obtain the following relations between the coefficients of $\psi_n(x)$ and $\chi_{M_{\cos\phi+1}}$:
\begin{equation}
\begin{array}{c}
\omega_0 = \sum_{i=0}^{d-3}c_i(-1)^{d+i} +( c_{d-2}-c_{d-1}+1)(-1)^{d}, \\
\omega_1 = \sum_{i=1}^{d-3} c_i(-1)^{d+1+i}  +{d-2\choose 0} (c_{d-1}-2) - {d-2\choose 1}(c_{d-2}-c_{d-1}+1),\\
\omega_2 = \sum_{i=2}^{d-3}c_i(-1)^{d+i+2} {i\choose 2} +{d-2\choose 0}-{d-2\choose 1}(c_{d-2}+1)+{d-2\choose 2}(c_{d-2}-c_{d-1}+1),\\
\vdots\\
\omega_{k} = \sum_{i=k}^{d-3}c_i(-1)^{d+i+k} {i\choose k}+{d-2\choose k}(-1)^k(c_{d-2}-c_{d-1}+1)+{i\choose k-1}(-1)^{k-1}(c_{d-1}-2)+{d-2\choose k-2}(-1)^{k-2}  \\\vdots\\
\omega_{d-3}=-c_{d-3}+(c-2)(-1)^{d-3}(c_{d-2}-c_{d-1}+1)
+ {d-2 \choose d-4}(-1)^{d-4}(c_{d-1}-2)+(-1)^{d-5}{d-2\choose d-5}, \\
\omega_{d-2} = (-1)^{d-2}(c_{d-2}-c_{d-1}+1)+(d-2)(-1)^{d-3}(c_{d-1}-2)+(-1)^{d-4}{d-2\choose d-4},\\
\omega_{d-1} = (-1)^{d}(c_{d-1}-2)+(d-2)(-1)^{d-1},\\
\omega_d = (-1)^{d}.
\end{array}
\label{companion_determinant_coeffs}
\end{equation}
Recall that by lemma \ref{lem:minpoly2} the coefficients of $m_{2\cos\frac{\gamma}{2}}(x)$ must be all integers if $\gamma$ is a rational multiple of $\pi$. One has to check if this condition is satisfied by all $\omega_i$'s starting from $\omega_{d-1}$. For this coefficient we have:
\begin{displaymath}
(-1)^{d-2}(c_{d-1}-2)+(d-2)(-1)^{d-3}\in\mathbb{N}\Rightarrow c_{d-1}\in\mathbb{N}.
\end{displaymath}
Note that $\omega_{d-1}\notin\mathbb{N}$ if and only if $c_{d-1}\notin\mathbb{N}$. 
In this case we are done, however there are polynomials $\psi_n(x)$ for which $c_{d-1}\in\mb{N}$. In this case we consider the equation for $\omega_{d-2}$: 
\begin{displaymath}
 \omega_{d-2} = (-1)^{d-2}(c_{d-2}-c_{d-1}+1)+(d-2)(-1)^{d-3}(c_{d-1}-2)+(-1)^{d-4}{d-2\choose d-4}\in\mathbb{N}\Rightarrow c_{d-2}\in\mathbb{N}.
\end{displaymath}Assuming that $c_{d-1}\in\mb{Z}$, the coefficient $\omega_{d-2}$ is non-integer only if $c_{d-2}\notin\mathbb{Z}$. One can use the same reasoning for the other $\omega_i$'s step by step and notice from (\ref{companion_determinant_coeffs}) that each $\omega_i$ depends on the coefficients $c_{d-1},\ldots,c_{i+1},c_i$, where $c_i$ is multiplied by the factor $(-1)^{d+2i} {i\choose i}=\pm 1$. If all of the coefficients  $c_{d-1},\ldots,c_{i+1},c_i$ are integers, then also $\omega_i$ belongs to $\mathbb{Z}$. Hence all $\omega_1,\ldots,\omega_d$ are integers if and only if  $c_1,c_2,\ldots,c_{d-1},c_d\in\mb{Z}$. On the other hand we have shown in lemma \ref{fact:coeffs_minpoly_cos} that $\psi_n(x)$ has always at least one non-integer coefficient if $n\notin\{1,2,4\}$. Therefore at least one $\omega_i$ does not belong to $\mb{Z}$ and $m_{2\cos\frac{\gamma}{2}}$ cannot be the minimal polynomial $\eta_m(x)$ for any $m\in\mb{Z}$. This means, $\gamma$ must be an irrational multiple of $\pi$.

Finally we consider what happens in the exceptional cases when $\phi=\{ \frac{\pi}{2},\pi,\frac{3\pi}{2},2\pi\}$, {\it i.e.} when $n\in\{1,2,4\}$. 
\begin{enumerate}
\item Let $\phi=\pm\frac{\pi}{2}$. The corresponding minimal polynomial and its companion matrix are of the form $\psi_{4}(x)=x$ and $M_{\cos\frac{\pi}{2}}=0$. Thus $M_{\cos\phi+1} = 1$ and its minimal polynomial is $m_{M_{\cos\phi+1}}(x)=x-1$. We have that $x-1 = \eta_3(x),$ thus $\gamma = \frac{k\pi}{3}$. 
\item Let $\phi=2\pi$, then $\psi_{1}(x)=x-1$, $M_{\cos 2\pi}=1$. The companion matrix for $M_{\cos\phi+1}$ is a $1\times 1$ matrix equal to $ M_{\cos\phi+1}= 1+1 = 2$. Thus $m_{M_{\cos\phi+1}}(x)=x-2$. This polynomial corresponds to $\gamma = 2\pi$.
\item Assume $\phi=\pi$, then $\psi_{2}(x)=x+1$, $M_{\cos\pi}=-1$ and $M_{\cos\phi+1} = 1-1 = 0$. Thus the minimal polynomial of $m_{\cos\phi+1}(x)=x=\eta_{2}(x)$ which is the minimal polynomial of  $2\cos\gamma/2$ for $\gamma =\pi$.
\end{enumerate}

\section{More examples}
Assume now that we have two finite order $SU(2)$-gates. They can be written as $U(\phi_1,\vec{k}_1)$ and $U_2(\phi_2,\vec{k}_2)$ where
\begin{gather}
U(\phi,\vec{k})=\cos\phi I+\sin\phi\left(k_xX+k_yY+k_zZ\right),\,\,\vec{k}=[k_x,k_y,k_z],\,\,k_x^2+k_y^2+k_z^2=1,
\end{gather}
and both $\phi_1$ and $\phi_2$ are rational multiples of $\pi$. Their product is $U(\gamma,\vec{k}_\gamma)=U(\phi_1,\vec{k}_1)U_2(\phi_2,\vec{k}_2)$, where
\begin{gather}
\cos\gamma=\cos\phi_1\cos\phi_2+\sin\phi_1\sin\phi_2\vec{k}_1\cdot\vec{k}_2.
\end{gather}
Under our assumptions $\cos\phi_1$, $\cos\phi_2$ and $\sin\phi_1$, $\sin\phi_2$ are algebraic numbers. It is well known that the sum and the product of an algebraic number and a transcendental number is transcendental. Moreover, if $\cos\phi$ is transcendental then $\phi$ is an irrational multiple of $\pi$. Thus if $\cos\theta =\vec{k}_1\cdot\vec{k}_2$ is transcendental then $U(\gamma,\vec{k}_\gamma)$ is of infinite order. In the following we show that it can also happen when $\cos\theta$ is an algebraic number.

\subsection{Gates $H$ and $T(\phi)$}
Let us consider a popular in quantum computation set of unitary gates  $\mc{S}=\{H,T\left(\phi\right)\} $, where $H=U(\frac{\pi}{2},\frac{1}{\sqrt{2}}(\vec{k}_z+\vec{k}_y))$ is the Hadamard gate, and $T\left(\phi\right)=U(\phi,\vec{k}_z)$ is a phase gate. Explicit matrices are given by:
\begin{gather}
H=\frac{1}{\sqrt{2}}\left(\begin{array}{cc}1&1\\-1&1
\end{array} \right),\:T\left(\phi\right)=\left(\begin{array}{cc}e^{i\phi}&0\\0&e^{-i\phi}
\end{array} \right).
\end{gather}We look for angles $\phi$ for which the product $HT_\phi$ is of infinite order. An example value of such $\phi$, $\phi=\frac{\pi}{8}$, was given in \cite{Boykin}.  A more general answer can be found using the method presented in previous sections. The matrices $H$ and $T(\phi)$ represented as rotation matrices from $SO(3)$ are of the form
\begin{gather}\label{eq:h_t_rotation_definition} \mr{Ad}_H=\left(\begin{array}{ccc}-1&0&0\\0&0&1\\0&1&0
\end{array}\right),\:\:\mr{Ad}_{T(\phi)}=\left(\begin{array}{ccc}\cos 2\phi&\sin 2\phi&0\\-\sin 2\phi&\cos 2\phi&0\\0&0&1.
\end{array}\right)
\end{gather}The product of $\mr{Ad}_H$ and $\mr{Ad}_{T(\phi)}$ is a rotation by angle $\gamma$ given by
\begin{gather*}
\mr{tr}\mr{Ad}_H\mr{Ad}_{T(\phi)}=\mr{tr}O(\gamma,\vec{k}),\\-\cos 2\phi=2\cos\gamma+1\Rightarrow -2\cos\gamma=\cos 2\phi+1,
\end{gather*}but from trigonometric identities we have that $-\cos\gamma=\cos(\gamma+\pi)$ thus we can write
\begin{gather}\label{eq:h_t_trace_formula}
2\cos\gamma'=\cos 2\phi+1,\:\:\gamma'=\gamma+\pi.
\end{gather}Note that equation (\ref{eq:h_t_trace_formula}) is of the same form as 
(\ref{eq:gamma_cos_plus_1}), thus by theorem \ref{lem:productRotations} we can say immediately that $\mr{Ad}_H\cdot\mr{Ad}_{T(\phi)}$ is of infinite order if and only if
\begin{gather}\label{eq:phi_hadd_cond} 
2\phi\neq\{0,\pi,\pm\frac{\pi}{2} \}.
\end{gather}
Hence the gate $HT(\phi)$ is of infinite order if and only if
\begin{gather}
\phi\neq\{0,\pi,\pm\frac{\pi}{2},\frac{\pi}{4},\frac{3\pi}{4} \}.
\end{gather}

\section{Acknowledgment} 
This work was supported by National Science Centre, Poland under the grant SONATA BIS: 2015/18/E/ST1/00200.

\end{document}